\def\@citex#1[#2]#3{%
  \@safe@activestrue\edef\@tempd{#1}\@safe@activesfalse
  \@safe@activestrue\edef\@tempe{#3}\@safe@activesfalse
  \org@@citex{\@tempd}[#2]{\@tempe}}%
\pgfplotsset{compat=1.5}
\newtheorem{theorem}{Theorem}
\newtheorem{lemme}[theorem]{Lemma}
\newtheorem{prop}[theorem]{Proposition}
\newcommand{\N}{\mathbb{N}}
\newcommand{\K}{\mathbb{K}}
\newcommand{\FF}{\mathbb{F}} 
\renewcommand{\tilde}[1]{\widetilde{#1}}
\renewcommand{\epsilon}{\varepsilon}
\renewcommand{\phi}{\varphi}
\newcommand{\grevlex}{\textsc{GRevLex}\xspace}
\newcommand{\lex}{\textsc{Lex}\xspace}
\newcommand{\Wgrevlex}{$W$-\grevlex}
\newcommand{\ltgrevlex}{<_\textrm{grevlex}}
\newcommand{\ltWgrevlex}{<_\textrm{$W$-grevlex}}
\newcommand{\LT}{\mathrm{LT}}
\newcommand{\Spol}{\mathrm{S\text{-}Pol}}
\newcommand{\F}[1]{\ifmmode\mathrm{F_{#1}}\else\textrm{F}$_{#1}$\fi\xspace}
\newcommand{\HS}{\mathrm{HS}}
\newcommand{\HI}{i_{\mathrm{reg}}}
\newcommand{\ireg}{\HI}
\newcommand{\dreg}{d_{\mathrm{reg}}}
\newcommand{\Fext}{F_{\mathrm{ext}}}
\newcommand{\pgcdfam}[1]{\mathrm{gcd}\left(#1\right)}
\newcommand{\ordre}[1]{<_{#1}}
\newcommand{\append}{\,\cup} 
\newcommand{\ifnonempty}[3]{
\def\tempa{}%
\def\tempb{#1}%
\ifx\tempa\tempb 
  #3 
\else 
  #2
\fi}
\renewcommand{\hom}[2][]{%
  \hspace{0.1em}%
  \mathrm{hom}_{#2}^{\ifnonempty{#1}{#1}{\phantom{-1}}}
  \ifnonempty{#1}{\hspace{-0.1em}}{\hspace{-0.2em}}%
}
\newcommand{\D}{\mathbf{D}}
\newcommand{\myvspace}[1]{\vspace{#1}}
\newenvironment{my_enumerate}{%
\vspace{-6pt}
\begin{enumerate}
     \setlength{\itemsep}{0pt}
     \setlength{\parskip}{0pt}
     \setlength{\parsep}{0pt}}
{\end{enumerate}
\vspace{-6pt}
}
\def\hypCounter#1{\expandafter\@hypCounter\csname c@#1\endcsname}
\def\@hypCounter#1{\ifcase#1\or H1\or H2\or H2'\fi}
\AddEnumerateCounter{\hypCounter}{\@hypCounter}{3}
\newenvironment{my_newenumerate}{%
\vspace{-6pt}
\begin{enumerate}[align=left,label=\textbf{\hypCounter*.},ref=\textbf{\hypCounter*},resume]
     \setlength{\itemsep}{0pt}
     \setlength{\parskip}{0pt}
     \setlength{\parsep}{0pt}}
{\end{enumerate}
\vspace{-6pt}
}
\newenvironment{my_itemize}{%
\vspace{-6pt}
\begin{itemize}
     \setlength{\itemsep}{0pt}
     \setlength{\parskip}{0pt}
     \setlength{\parsep}{0pt}}
{\end{itemize}
\vspace{-6pt}
}
\newcommand{\superscript}[1]{\ensuremath{^{\textrm{#1}}}}
\def\sharedaffiliation{\end{tabular}\newline\begin{tabular}{c}}
\def\labo{\superscript{*}}
\def\ens{\superscript{\dag}}
\def\iuf{\superscript{\ddag}}
\title{On the Complexity of Computing Gröbner Bases\\for Quasi-Homogeneous Systems}
\author{%
  \and \ Jean-Charles Faugère\labo\\
  \email{Jean-Charles.Faugere@inria.fr}
  \and \ Mohab Safey El Din\labo\iuf\\
  \email{Mohab.Safey@lip6.fr}
  \and \ Thibaut Verron\ens\labo\\
  \email{Thibaut.Verron@ens.fr}
  \sharedaffiliation
  \begin{tabular}{ccc}
    \affaddr{{\labo}INRIA, Paris-Rocquencourt Center, PolSys Project{\ }} && \affaddr{{\iuf}Institut Universitaire de France{\ }}   \\
    \affaddr{UPMC, Univ. Paris 06, LIP6} &&  \affaddr{
    } \\
    \affaddr{CNRS, UMR 7606, LIP6} && \affaddr{{\ens}École Normale Supérieure,} \\
    \affaddr{Case 169, 4, Place Jussieu, F-75252 Paris} &&  \affaddr{45, rue d'Ulm, F-75230, Paris} \\
  \end{tabular}
}
\begin{document}
\conferenceinfo{ISSAC'13,} {June 26--29, 2013, Boston, Massachusetts, USA.}
\CopyrightYear{2013}
\crdata{978-1-4503-2059-7/13/06}
\clubpenalty=10000
\widowpenalty = 10000

\addtolength\abovedisplayskip{-0.5\baselineskip}%
\addtolength\belowdisplayskip{-0.5\baselineskip}%

\maketitle

\begin{abstract}
  Let $\K$ be a field and $(f_1, \ldots, f_n)\subset \K[X_1, \ldots, X_n]$ be a sequence of quasi-homogeneous polynomials of respective weighted degrees $(d_1, \ldots, d_n)$ w.r.t a system of weights $(w_{1},\dots,w_{n})$.
  Such systems are likely to arise from a lot of applications, including physics or cryptography.
  
  We design strategies for computing Gr\"obner bases for quasi-homogeneous systems by adapting existing algorithms for homogeneous systems to the quasi-homogeneous case.
  Overall, under genericity assumptions, we show that for a generic zero-dimensional quasi-homogeneous system, the complexity of the full strategy is polynomial in the weighted Bézout bound $\prod_{i=1}^{n}d_{i} / \prod_{i=1}^{n}w_{i}$.

  We provide some experimental results based on generic systems as well as systems arising from a cryptography problem.
  They show that taking advantage of the quasi-homogeneous structure of the systems allow us to solve systems that were out of reach otherwise.
\end{abstract}

\category{I.1.2}{Symbolic and Algebraic Manipulation}{Algorithms}
\category{F.2.2}{Analysis of Algorithms and Problem Complexity}{Nonnumerical Algorithms and Problems}
\keywords{Gröbner bases; Polynomial system solving; Quasi-homogeneous polynomials}

\section{Introduction}
\label{sec:introduction}
Polynomial system solving is a very important problem in computer algebra, with a wide range of applications in theory (algorithmic geometry) or in real life (cryptography).
For that purpose, Gröbner bases of polynomial ideals are a valuable tool, and practicable computation of the Gröbner bases of any given ideal is a major challenge of modern computer algebra.
Since their introduction in 1965, many algorithms have been designed to compute Gröbner bases (\cite{Buch76,F99a,Fau02a,FGLM}), improving the efficiency of the computations.

Systems arising from ``real life'' problems often have some structure.
It has been observed that most of these structures can make the Gröbner basis easier to compute.
For example, it is known that homogeneous systems, or systems with an important maximal homogeneous component, are better solved by using a degree-compatible order, and then applying a change of ordering.
In this paper, we study a structure slightly more general than homogeneity, called \emph{quasi-homogeneity}.
More precisely, we will say that a polynomial $P(X_{1},\dots,X_{n})$ is quasi-homogeneous for the \emph{system of weights} $W=(w_{1},\dots,w_{n})$, if the polynomial
\begin{equation}
  Q(Y_{1},\dots,Y_{n}):=P(Y_{1}^{w_{1}},\dots,Y_{n}^{w_{n}})\label{eq:7}
\end{equation}
is homogeneous.
Systems with such a structure are likely to arise for example from physics, where all measures are associated with a dimension which, to some extent, can be seen as a weight.

Let $F = (f_{1},\dots,f_{m})$ be a system of polynomials, in a polynomial algebra graded w.r.t the system of weights  $W=(w_{1},\dots,w_{n})$.
In the following, we will assume that $F$ is quasi-homogeneous and generic, or more generally that its quasi-homogeneous components of maximal weighted degree are generic.
It is possible to compute directly a Gröbner basis of the ideal generated by $F$ .
This strategy consists of running the classical algorithms \F5~(\cite{Fau02a}) and FGLM~(\cite{FGLM}) on $F$, while ignoring the quasi-homogeneous structure.
However, to the best of our knowledge, there is no general way of evaluating the complexity of that strategy.

Another approach is to compute the \emph{homogenized} system defined by
$\tilde{F} := (f_{i}(X_{1}^{w_{1}},\dots,X_{n}^{w_{n}}))$, 
and then compute a Gröbner basis of that system, using the usual strategies for the homogeneous structure.
Experimentally, the first step of the computation is much faster than with the naive strategy.
However, the number of solutions is increased by a factor of $\prod_{i=1}^{n}w_{i}$, slowing down the change of ordering, which thus becomes the main bottleneck of the computation.

Furthermore, to the best of our knowledge, the best complexity bounds for this computation are those we obtain for a homogeneous system of the same degree.
However, experimentally, the first step of the computation proves faster for a homogenized quasi-homogeneous system with weighted degree $(d_{1},\dots,d_{n})$ than for a homogeneous system of total degree $(d_{1},\dots,d_{n})$.

\vspace{-10pt}
\paragraph*{Main results}We provide a complexity study of the above strategy, allowing us to quantify 
this speed-up, as well as to propose a workaround for the change of ordering.
Overall, we prove that the known bounds for this strategy can be divided by $\prod_{i=1}^{n}w_{i}$ for a generic zero-dimensional $W$-homogeneous system with weights $W=(w_{1},\dots,w_{n})$.

More precisely, we assume the system $(f_{1},\dots,f_{m})$ to satisfy the two following generic assumptions:
\begin{my_newenumerate}
  \item \label{item:H1}The sequence $f_{1}, \dots, f_{m}$ is regular;
  \item \label{item:H2}The sequence $f_{1},\dots,f_{i}$ is in Noether position w.r.t. $X_{1},\dots,X_{i}$, for any $1 \leq i \leq m$.
\end{my_newenumerate}

Under hypothesis~\ref{item:H1}, we adapt the classical results of the homogeneous case, using similar arguments based on Hilbert series, to estimate the degree of the ideal and the degree of regularity of the system:
\begin{equation}
  \label{eq:11}
  \deg(I) 
  = \prod_{i=1}^{n}\frac{d_{i}}{w_{i}}\; ;
  \;\;\;
  \dreg(F) \leq \sum_{i=1}^{n}\Big(d_{i} - w_{i}\Big) + \max\{w_{i}\}.
\end{equation}

We study the complexity of the \F5 algorithm through its matrix variant matrix-\F5.
This is a usual approach, carried on for example in~\cite{FR09}.
With minor changes, the matrix-\F5 algorithm for homogeneous systems can be adapted to quasi-homogeneous systems.
A combinatorial result found in~\cite{Geir02} shows that the number of columns of the matrices appearing in that variant of matrix-\F5 is approximately smaller by a factor of $\prod_{i=1}^{n}w_{i}$, when compared to the regular matrix-\F5 algorithm.
Overall, we can obtain complexity bounds which are smaller by a factor of $P^{\omega}$ than the bounds we would obtain for a generic homogeneous system with same degrees, where $P = \prod_{i=1}^{n}w_{i}$ and $\omega$ is the exponent of the complexity of matrix multiplication.
In the end, we show that for systems satisfying~\ref{item:H1}, our strategy, running \F5 on the homogenized system, dehomogenizing the result, and then running FGLM, performs in time polynomial in $\prod_{i=1}^{n}d_{i}/\prod_{i=1}^{n}w_{i}$, that is polynomial in the number of solutions.

Further assuming hypothesis~\ref{item:H2}, we also carry on the precise complexity analyses done in~\cite{Bar04} for homogeneous systems, and adapt them to the quasi-homogeneous case to deduce a precise complexity bound for our quasi-homogeneous variant of Matrix-\F5.
These new complexity bounds are also smaller by a factor of $P^{\omega}$ than similar bounds for a generic homogeneous system.
Even though these bounds still do not match exactly the experimental complexity, they tend to confirm that overall, we are able to compute a \lex Gröbner basis for a generic quasi-homogeneous system in time reduced by a factor of $P^{\omega}$, when compared with a generic homogeneous system with same degrees.

We have run benchmarks with the FGb library (\cite{F10c}) and the Magma computer algebra software (\cite{Magma}), on both generic systems and real-life systems arising in cryptography.
Experimentally, in both cases, our strategy seems always faster than ignoring the quasi-homogeneous structure, and the speed-up increases with the considered weights. 

Experiments have also shown that the order of the variables can have an impact on the performances of both strategies.
Predicting this behavior seems to require more sophisticated tools and may be material for future research.

\vspace{-10pt}
\paragraph*{Prior works}
\label{sec:prior-works}
Making use of the structure of polynomial systems to develop faster algorithms has been a general trend over the past few years:
see for example~\cite{FJ03}, \cite{MR2035232} or \cite{FSS10b}. 
Polynomial algebras graded with respect to a system of weights have been studied by researchers in commutative algebra.
Most notably, the Hilbert series of ideals defined by regular sequences, which we use several times in this paper, is well known, and could be found for example in~\cite{Stanley78}.
The paper~\cite{robbiano86} defines many structures of polynomial algebras, including weighted gradings, in preparation for future algorithmic developments.
Combinatorial objects arising when we try to estimate the number of monomials of a given $W$-degree are called \emph{Sylvester denumerants}, and studied for example in~\cite{Geir02}.

When it comes to Gröbner bases, weighted gradings and related orderings have been described in early works such as~\cite{becker1993grobner}.
However, as far as we know, the impact of a quasi-homogeneous structure on the complexity of Gröbner bases computations had never been studied.

Among the various computer algebra software able to compute Gröbner bases, it seems that only Magma has algorithms dedicated to quasi-homogeneous systems.
Given a quasi-homogeneous system, it will detect the appropriate system of weights, and use the \Wgrevlex ordering to compute an intermediate basis before the change of ordering.
However, this strategy is only available for quasi-homogeneous systems, while it can be useful in many other cases, for example systems of polynomials defined as the sum of a quasi-homogeneous component and a scalar.

Other computer algebra software (e.g. Singular) allow the user to compute $\tilde{F}$ and to run the Gröbner basis algorithm on it.
Since all these algorithms (most often Buchberger, \F4 or \F5) use $S$-pairs, they will show a similar speed-up.
However, the user must notice that the computations may benefit from using a quasi-homogeneous structure of the system, and provide the system of weights.

We do not provide a way to know what is the ``appropriate'' system of weights for a given system, or even to detect systems which would benefit from taking into account the quasi-homogeneous structure.
However, some systems obviously belong to that category (e.g quasi-homogeneous plus scalar), and the system of weights will then be easy to compute.

\vskip-0.35cm
\paragraph*{Structure of the paper}
\label{sec:structure-paper}
In section~\ref{sec:nouvelle-section-2} we define more precisely quasi-homogeneous systems, and we compute their degree and degree of regularity assuming the above hypotheses.
We also take this opportunity to show briefly that these hypotheses are generic.
In section~\ref{sec:comp-grobn-basis} we prove that the strategy consisting of modifying the system is correct, we explain how we
can adapt matrix-\F5 and FGLM to quasi-homogeneous systems, and then we evaluate the complexity of these algorithms.
In section~\ref{sec:cas-affine}, we briefly explain how these results for quasi-homogeneous systems still help in case the system was obtained from a quasi-homogeneous system by specializing one of the variables to 1.
We also give an example of such a structure, as well as the associated algorithm.
Finally, in section~\ref{sec:experimental-results}, we give some experimental results.

\vspace{-0.2cm}
\section{Quasi-homogeneous systems}
\label{sec:nouvelle-section-2}

\vspace{-0.2cm}
\subsection{Weighted degrees and polynomials}
\label{sec:polynomes-degres}

\noindent
Let $\K$ be a field.
We consider the algebra $A := \K[X_{1},\dots,X_{n}] = \K[\mathbf{X}]$.
Even though one usually uses the total degree to grade the algebra $A$, there are other ways to define such a grading, as seen in \cite{becker1993grobner}, for example.

\vspace{-0.2cm}
\begin{definition}
  Let $W = (w_{1},\dots,w_{n})$ be a vector of positive integers.
  Let  $\alpha = (\alpha_{1}, \dots, \alpha_{n})$ be a tuple of nonnegative integers.
  Let the integer
  $ \deg_{W}(\mathbf{X^{\alpha}}) = \sum_{i=1}^{n} w_{i}\alpha_{i}$
  be the \emph{$W$-degree}, or \emph{weighted degree} of the monomial
  $\mathbf{X}^{\alpha} = X_{1}^{\alpha_{1}}\cdots X_{n}^{\alpha_{n}}.$
  Call the vector $W$ a \emph{system of weights}.
  We denote by $\mathbf{1}$ the system of weights defined by $(1,\dots,1)$, associated with the usual grading on $A$.
\end{definition}

\vspace{-0.2cm}
One can prove that any grading on $\K[\mathbf{X}]$
comes from such a system of weights (\cite[sec.~10.2]{becker1993grobner}).
We denote by $(\K[\mathbf{X}],W)$ the $W$-graded structure on $A$, and
in that case, to clear ambiguities, we use the adjective
\emph{$W$-homogeneous} for elements or ideals, or
\emph{quasi-homogeneous} or \emph{weighted homogeneous} if $W$ is clear in the context.  The word
\emph{homogeneous} will be reserved for $\mathbf{1}$-homogeneous items.

\vspace{-0.2cm}
\begin{prop}
  Let $(\K[X_{1},\dots,X_{n}],W)$ be a graded polynomial algebra.
  Then the application
    \begin{equation}
    \label{eq:3}
    \begin{matrix}
      \hom{W} : & (\K[X_{1},\dots,X_{n}],W) & \to & (\K[t_{1},\dots,t_{n}],\mathbf{1}) \\
      & f & \mapsto & f(t_{1}^{w_{1}},\dots,t_{n}^{w_{n}})
    \end{matrix}
  \end{equation}
  is an injective graded morphism, and in particular the image of a quasi-homogeneous polynomial is a homogeneous polynomial.
\end{prop}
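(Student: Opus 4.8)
The plan is to reduce everything to the action of $\hom{W}$ on monomials, since each of the three claimed properties---being a morphism, injectivity, and respecting the grading---can be checked on the monomial basis of the source algebra.

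First I would observe that $\hom{W}$ is nothing but the substitution homomorphism sending $X_i$ to $t_i^{w_i}$. By the universal property of the polynomial algebra $\K[X_1,\dots,X_n]$, any choice of images for the generators extends uniquely to a $\K$-algebra morphism, so $\hom{W}$ is automatically a $\K$-algebra morphism with no verification required. This also pins down its effect on an arbitrary monomial: $\hom{W}(\mathbf{X}^{\alpha}) = t_1^{w_1\alpha_1}\cdots t_n^{w_n\alpha_n} = \mathbf{t}^{\beta}$, where $\beta_i = w_i\alpha_i$.

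Next, for injectivity, I would use that the $\mathbf{X}^{\alpha}$ form a $\K$-basis of the source. Since every weight $w_i$ is a positive integer, the map on exponent vectors $\alpha \mapsto (w_1\alpha_1,\dots,w_n\alpha_n)$ is injective; hence distinct monomials have distinct images, and the images $\mathbf{t}^{\beta}$ are $\K$-linearly independent in the target. A $\K$-linear map carrying a basis to a linearly independent family is injective, which settles this point.

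Finally, for the graded property, I would compute the total degree of the image of a monomial: $\deg(\mathbf{t}^{\beta}) = \sum_i \beta_i = \sum_i w_i\alpha_i = \deg_{W}(\mathbf{X}^{\alpha})$. Thus $\hom{W}$ sends the $W$-degree $d$ component of $(\K[\mathbf{X}],W)$ into the total-degree $d$ component of $(\K[t_1,\dots,t_n],\mathbf{1})$, which is exactly what it means to be a graded morphism (of degree $0$). In particular, a $W$-homogeneous polynomial of $W$-degree $d$ is sent to a $\K$-linear combination of monomials all of total degree $d$, i.e.\ to a homogeneous polynomial of degree $d$. I do not expect any genuine obstacle here: the whole statement follows from the monomial computation above, the only points demanding care being the use of positivity of the weights to obtain injectivity, and the bookkeeping that identifies the $W$-degree of a monomial with the ordinary degree of its image.
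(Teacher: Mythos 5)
Your proof is correct and follows the same route the paper has in mind: the paper dismisses the statement as ``an easy consequence of the definition of the grading w.r.t a system of weights,'' and your monomial computation (substitution morphism via the universal property, injectivity from positivity of the weights, and the identity $\deg(\hom{W}(\mathbf{X}^{\alpha})) = \sum_i w_i\alpha_i = \deg_W(\mathbf{X}^{\alpha})$) is precisely the verification being left implicit. Nothing is missing; you have simply written out the details the paper omits.
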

\begin{proof}
  It is an easy consequence of the definition of the grading w.r.t a system of weights.
\end{proof}\vspace{-0.2cm}

The above morphism also provides a quasi-homogeneous variant of the \grevlex ordering (as found for example in \cite{becker1993grobner}), which we call the \emph{\Wgrevlex ordering}:
\begin{equation}
  \label{eq:102}
      u \ltWgrevlex v
      \iff \hom{W}(v) \ltgrevlex \hom{W}(v)
\end{equation}
Given a $W$-homogeneous system $F$, one can build the homogeneous system $\hom{W}(F)$, and then apply classical algorithms (\cite{Fau02a,FGLM}) to that system to compute a \grevlex (resp. \lex{}) Gröbner basis of the ideal generated by $\hom{W}(F)$.
We will prove in section~\ref{sec:comp-grobn-basis} (prop.~\ref{lemme:lemme:passage_par_jW})
that this basis is contained in the image of $\hom{W}$, and that its pullback is a \Wgrevlex (resp. \lex{}) Gröbner basis of the ideal generated by $F$.

Let us end this paragraph with some notations and definitions.
The \emph{degree of regularity} of the system $F$ is the highest degree $\dreg(F)$ reached in a run of \F5 to compute a \grevlex Gröbner basis of $\hom{W}(F)$.
The \emph{index of regularity} of an ideal $I$ is the degree $\ireg$ of the Hilbert series $\HS_{A/I}$, defined as the difference of the degree of its numerator and the degree of its denominator.

Recall that given a homogeneous ideal $I$, we define its degree $D$ as the degree of the projective variety $V(I)$, as introduced for example in~\cite{hartshorne77}.
This definition still holds for the quasi-homogeneous case.
In case the projective variety is empty, that is if the affine variety is equal to $\{0\}$, we extend that definition by letting $D$ be the multiplicity of the $0$ point, that is the dimension of the $\K$-vector space $A/I$.
Finally, from now on we will only consider \emph{affine} varieties, even when the ideal is quasi-homogeneous.
In particular, the dimension of $V(0)$ is $n$, and that a zero-dimensional variety will be defined by at least $n$ polynomials.

\subsection{Degree and degree of regularity}
\label{sec:degr-regul-degr}

\vspace{-0.4cm}
\paragraph*{Zero-dimensional regular sequences}
\label{sec:hilbert-series-w}
As in the homogeneous case, regular sequences are an important case to study, because it is a generic property which allows us to compute several key parameters and good complexity bounds.
We first characterize the degree and bound the degree of regularity of a zero-dimensional ideal defined by a regular sequence.

\vspace{-0.3cm}
\begin{theorem}\label{thm:zerodim-values}
  Let $W = (w_{1},\dots,w_{n})$ be a system of weights, and $F=(f_{1},\dots,f_{m})$ a regular sequence of $W$-homogeneous polynomials, of respective $W$-degrees $d_{1},\dots,d_{m}$.
  Further assume that the set of solutions is zero-dimensional, that is $m=n$.
  We denote by $I$ the quasi-homogeneous ideal generated by $F$. 
  Then we have $\deg(I) =\prod_{i=1}^{n}\frac{d_{i}}{w_{i}}$ and $\dreg(F) \leq \sum_{i=1}^{n}\big(d_{i} - w_{i}\big) + \max\{w_{i}\}$.
\end{theorem}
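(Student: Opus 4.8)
The plan is to derive both quantities from the $W$-graded Hilbert series of $A/I$, mimicking the homogeneous case but with the weighted grading. First I would record the series of the graded algebra itself: since $X_i$ has $W$-degree $w_i$, one has $\HS_{A}(t) = \prod_{i=1}^{n} 1/(1 - t^{w_i})$. Then, invoking hypothesis~\ref{item:H1}, each $f_i$ is a nonzerodivisor modulo $(f_1,\dots,f_{i-1})$; as $f_i$ is $W$-homogeneous of $W$-degree $d_i$, multiplication by $f_i$ shifts degrees by $d_i$ and gives a short exact sequence of graded modules, multiplying the Hilbert series by $(1 - t^{d_i})$. Iterating over the regular sequence yields
\begin{equation}
  \HS_{A/I}(t) = \prod_{i=1}^{n} \frac{1 - t^{d_i}}{1 - t^{w_i}}.
\end{equation}

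For the degree, I would use that the system is zero-dimensional, so its affine variety reduces to the origin and, by the convention fixed above, $\deg(I)$ equals the multiplicity $\dim_{\K} A/I$; moreover $A/I$ being finite-dimensional forces $\HS_{A/I}$ to be a polynomial. Evaluating that polynomial at $t = 1$ sums all graded dimensions, hence computes $\dim_{\K} A/I$, while each factor contributes $\lim_{t\to 1}(1-t^{d_i})/(1-t^{w_i}) = d_i/w_i$ (write $1 - t^{k} = (1-t)(1 + \dots + t^{k-1})$ and cancel $1-t$). This gives $\deg(I) = \prod_{i=1}^{n} d_i/w_i$.

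For the regularity statement, reading off the degrees of numerator and denominator above gives $\ireg = \sum_{i=1}^{n} d_i - \sum_{i=1}^{n} w_i = \sum_{i=1}^{n}(d_i - w_i)$; since $\HS_{A/I}$ is a polynomial with leading term $t^{\ireg}$, one has $\dim_{\K}(A/I)_{d} = 0$ for every $d > \ireg$. The last step is to bound $\dreg(F)$, which I take to be the highest $W$-degree at which a new pivot in matrix-\F5 -- equivalently a minimal generator of the initial ideal $\LT(I)$ -- can appear. The point is that $\dim_{\K}(A/I)_{d}$ counts exactly the standard monomials of $W$-degree $d$, so $\dim_{\K}(A/I)_{d} = 0$ means every monomial of $W$-degree $d$ lies in $\LT(I)$; hence every monomial of $W$-degree $> \ireg$ is in $\LT(I)$. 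Given a minimal generator $m$ of $\LT(I)$ of $W$-degree $d$, pick any $X_i \divides m$: minimality forces $m/X_i \notin \LT(I)$, so $\deg_{W}(m/X_i) = d - w_i$ cannot exceed $\ireg$, whence $d \le \ireg + w_i \le \ireg + \max\setof{w_j}$. Maximizing over all minimal generators yields $\dreg(F) \le \sum_{i=1}^{n}(d_i - w_i) + \max\setof{w_i}$.

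The main obstacle, and the only place where the weighted case genuinely departs from the classical homogeneous argument, is this final bound. In the standard grading one gets $\dreg = \ireg + 1$ because a monomial's parents sit exactly one degree below it; here a pure power $X_i^{a}$ of a heaviest variable can be a minimal generator whose single parent $X_i^{a-1}$ already lies at degree $\ireg$, pushing the generator up to degree $\ireg + \max\setof{w_i}$. Making this rigorous requires care in identifying the \F5 degree of regularity with the top degree of a minimal generator of $\LT(I)$, and in confirming -- via prop.~\ref{lemme:lemme:passage_par_jW} -- that the matrix-\F5 run on $\hom{W}(F)$ stays inside the image of $\hom{W}$, so that degrees are faithfully measured by the $W$-grading of $A$ rather than by the coarser total degree in $\K[t_1,\dots,t_n]$.
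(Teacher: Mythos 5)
Your proposal follows essentially the same route as the paper's proof: the $W$-graded Hilbert series of a regular sequence (which the paper quotes from Stanley and you re-derive via the standard short-exact-sequence argument — equivalent content), evaluation at $t=1$ to get $\deg(I)=\prod_{i=1}^{n}d_{i}/w_{i}$, and the observation that $\HS_{A/I}$ being a polynomial of degree $\ireg=\sum_{i=1}^{n}(d_{i}-w_{i})$ forces every monomial of $W$-degree $>\ireg$ into $\LT(I)$, so that each minimal generator of $\LT(I)$ has $W$-degree at most $\ireg+\max\{w_{i}\}$. Your ``parent monomial'' argument (a minimal generator $m$ divided by any variable $X_{i}$ must leave $\LT(I)$, hence $\deg_{W}(m)-w_{i}\le\ireg$) is a more explicit and welcome write-up of what the paper compresses into a divisibility remark. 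The one step you leave open --- and correctly flag yourself --- is the passage from ``all leading terms of the reduced basis appear in $W$-degree at most $\ireg+\max\{w_{i}\}$'' to a bound on $\dreg(F)$, which the paper defines operationally as the highest degree \emph{reached} in a run of \F5, not merely the top degree of a minimal generator of $\LT(I)$. The paper closes this in one sentence: by the \F5 criterion there are no reductions to zero in a run of \F5 on a regular sequence, so the algorithm does not keep producing work in degrees beyond the last new leading term and indeed stops by $W$-degree $\ireg+\max\{w_{i}\}$. Adding that invocation of the regularity hypothesis would make your argument complete; everything else is correct and matches the paper.
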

\begin{proof}
  We will determine the degree and degree of regularity of the system from the Hilbert series (or Poincaré series) of the algebra $A/I$.
  A classical result which can be found for example in~\cite[cor.~3.3]{Stanley78} states that, for regular sequences, this series is
  \begin{equation}
    \label{eq:2}
    \HS_{\sfrac{A}{I}}(t)=\frac{(1-t^{d_{1}})\cdots(1-t^{d_{m}})}{(1-t^{w_{1}})\cdots(1-t^{w_{n}})}.
  \end{equation}
  We assumed $n=m$, so the Hilbert series can be rewritten as
  \begin{equation}
    \label{eq:8}
    \HS_{\sfrac{A}{I}}(t) = \frac{(1+\dots+t^{d_{1}-1})\cdots(1+\dots+t^{d_{n}-1})}{(1+\dots+t^{w_{1}-1})\dots(1+\dots+t^{w_{n}-1})}.
  \end{equation}
  In the 0-dimensional case, recall that the Hilbert series is
  actually a polynomial, and has degree
  $\ireg =\sum_{i=1}^{n}(d_{i}-w_{i})$.
  This means that all monomials of $W$-degree greater than $\ireg$ are in the ideal, and 
  as such, that the leading terms of the \Wgrevlex Gröbner basis of $F$ need to divide all the monomials of $W$-degree greater than $\ireg$.
  This proves that all the polynomials in the Gröbner basis computed by \F5 have $W$-degree at most
  $ 
  \ireg +\max\{w_{i}\}$.
  And since the \F5~criterion (\cite{Fau02a}) ensures that there is no reduction to zero in a run of \F5 on a regular sequence, the algorithm indeed stops in degree at most $\ireg + \max\{w_{i}\}$.

  Furthermore, the degree of the ideal $I$ is equal to the dimension of the vector space $A/I$
  , that is the value of
  the Hilbert series at $t=1$, that is
  $\prod_{i=1}^{n} \frac{d_{i}}{w_{i}}$.
\end{proof}

Note that except for this inequality, not much is known about the
degree of regularity of a quasi-homogeneous system.  In particular, the
above bound is nothing more than a bound, even in the generic case.
Let us introduce some examples of the three cases one can observe with
a quasi-homogeneous generic system:
\begin{my_enumerate}
  \item \label{item:1} $W = (3,2,1)$, generic system of $W$-degree $\D=(6,6,6)$: then $\dreg = \ireg +1 = 13$;
  \item \label{item:2} $W = (1,2,3)$, generic system of $W$-degree $\D=(6,6,6)$: then $\dreg = 15 > \ireg +1 = 13$;
  \item \label{item:3} $W=(2,3)$, generic system of $W$-degree $\D=(6,6)$: then $\dreg= 6 < \ireg=7$.
\end{my_enumerate}

Only the case~\ref{item:1} is observed with generic homogeneous systems.
Furthermore, examples~\ref{item:1} and \ref{item:2} show that the degree of regularity depends upon the order of the variables (chosen in the description of the system of weights).
As the Hilbert series of a generic sequence doesn't depend on that order, it shows that we probably need to find a better tool in order to evaluate more precisely the degree of regularity in the quasi-homogeneous case.
However, the above bound already leads to good improvements on the complexity bounds, as we will see in the following sections.
Also note that these computations only hold when the system is
0-dimensional, we will discuss that restriction in
section~\ref{sec:noether-position}.

\vspace{-0.2cm}
\paragraph*{Genericity}
\label{sec:genericity}
We now prove that zero-dimensional $W$-homoge\-neous sequences of given $W$-degree are generically regular, under some assumptions on the $W$-degree.
Let us start with the first part of this statement:
\vspace{-0.2cm}
\begin{lemme}
  \label{lemme:reg_sequence_gen}
    Let $n$ be a positive integer, and consider the algebra $A := \K[X_{1},\dots,X_{n}]$, graded with respect to the system of weights $W = (w_{1},\dots,w_{n})$.
  Regular sequences
  of length $n$ 
  form a Zariski-open subset of all sequences of quasi-homogeneous polynomials of given $W$-degree in $A$.
\end{lemme}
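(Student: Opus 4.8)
The plan is to recast regularity as a geometric non-vanishing condition and then invoke the main theorem of elimination theory. First I would describe the parameter space: for each $i$, the $W$-homogeneous polynomials of $W$-degree $d_{i}$ form a finite-dimensional $\K$-vector space $E_{i}$, and a sequence $(f_{1},\dots,f_{n})$ is nothing but a point of the affine space $\mathcal{E} := E_{1}\times\dots\times E_{n}$, whose coordinates are the coefficients of the $f_{i}$. The goal is then to prove that the set of regular sequences is a Zariski-open subset of $\mathcal{E}$.

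The crucial reduction is a geometric characterization of regularity. Since $A = \K[\mathbf{X}]$ is Cohen--Macaulay of dimension $n$ and the $f_{i}$ are $W$-homogeneous of positive degree, a length-$n$ sequence is regular if and only if it is a system of parameters, that is, if and only if the affine variety $V(f_{1},\dots,f_{n}) \subseteq \bar{\K}^{n}$ is reduced to $\{0\}$: in a Cohen--Macaulay ring every system of parameters is a regular sequence, and conversely a regular sequence of length $\dim A$ forces the quotient to have dimension $0$. I would then transport this condition through the morphism $\hom{W}$ of~\eqref{eq:3}. Over $\bar{\K}$ every coordinate admits a $w_{j}$-th root, so the substitution $x_{j} = y_{j}^{w_{j}}$ gives a surjection with finite fibers from $V(\hom{W}(f_{1}),\dots,\hom{W}(f_{n}))$ onto $V(f_{1},\dots,f_{n})$; hence the affine variety of $F$ is $\{0\}$ if and only if that of $\hom{W}(F)$ is. Equivalently, $F$ is regular if and only if the homogeneous polynomials $\hom{W}(f_{1}),\dots,\hom{W}(f_{n})$ have no common zero in $\mathbb{P}^{n-1}(\bar{\K})$.

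Openness now follows from classical elimination theory. I would form the incidence variety
\begin{equation}
  \Gamma = \setof{(f_{1},\dots,f_{n},y) \in \mathcal{E} \times \mathbb{P}^{n-1} : \hom{W}(f_{i})(y) = 0 \text{ for all } i},
\end{equation}
which is closed because it is cut out by the vanishing of the $\hom{W}(f_{i})(y)$, and these depend polynomially on the data (the map $f_{i} \mapsto \hom{W}(f_{i})$ is linear on coefficients). Since $\mathbb{P}^{n-1}$ is a complete variety, the projection $\pi \colon \mathcal{E} \times \mathbb{P}^{n-1} \to \mathcal{E}$ is closed, so $\pi(\Gamma)$ is Zariski-closed in $\mathcal{E}$. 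By the previous paragraph $\pi(\Gamma)$ is exactly the locus of sequences that fail to be regular, so its complement is the desired Zariski-open set of regular sequences.

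The step demanding the most care is the geometric characterization in the second paragraph: one must handle the weight substitution over an algebraically closed field correctly and rely on the Cohen--Macaulay property to bridge the homological definition of a regular sequence and the geometric notion of a system of parameters. Once this is in place the elimination-theoretic argument is entirely standard; alternatively, one could dispense with $\hom{W}$ and argue directly in the weighted projective space $\mathbb{P}(w_{1},\dots,w_{n})$, which is likewise complete.
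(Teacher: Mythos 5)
Your proof is correct, but it takes a genuinely different route from the paper's. The paper never leaves the algebraic side: by Stanley's Hilbert-series characterization of regular sequences (the series \eqref{eq:2}), a length-$n$ quasi-homogeneous sequence is regular if and only if the ideal it generates contains every monomial of $W$-degree between $\ireg+1$ and $\ireg+\max\{w_{i}\}$, where $\ireg=\sum(d_{i}-w_{i})$; for fixed $W$-degrees this is a solvability condition on finitely many systems of linear equations whose entries depend polynomially on the coefficients, i.e.\ the non-vanishing of certain determinants, hence an open condition. You instead characterize regularity geometrically --- in the Cohen--Macaulay ring $\K[\mathbf{X}]$ a length-$n$ sequence of quasi-homogeneous polynomials of positive $W$-degree is regular exactly when its affine variety over $\bar{\K}$ is $\{0\}$ --- push this through $\hom{W}$ to a common-zero condition in $\mathbb{P}^{n-1}$, and conclude by completeness of projective space (elimination theory) that the non-regular locus is the closed image $\pi(\Gamma)$. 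Both arguments are sound; yours rests on two standard facts that should be stated, and which you partly flag: regularity is insensitive to the base change $\K\to\bar{\K}$ (faithful flatness, or invariance of the Hilbert series), and the closed set $\pi(\Gamma)$ is defined over the ground field $\K$ because projective elimination is rational over $\K$. As for what each approach buys: yours is conceptually cleaner and avoids the Hilbert-series machinery entirely; the paper's is more effective, since it exhibits explicit determinantal equations for the bad locus in degrees controlled by the index of regularity, and the very same characterization is reused verbatim to prove Lemma~\ref{lemme:position-noether-gen} (genericity of Noether position) by applying it to the extended sequence $(f_{1},\dots,f_{m},X_{m+1},\dots,X_{n})$. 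Your method would adapt to that setting just as easily, since the extended sequence is again a length-$n$ quasi-homogeneous sequence, but you would need to rerun the reduction for it.
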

\begin{proof}
  Let $(d_{1},\dots,d_{m})$ be a family of $W$-degrees, we consider the set $V\left(\K[\mathbf{a}][\mathbf{X}]\right)$ of all systems of quasi-homogeneous polynomials of $W$-degree $d_{1},\dots,d_{m}$, where $\mathbf{a}$ is a set of variables representing the coefficients of the polynomials.
  We denote by $f_{1},\dots,\hspace{-0.02em}f_{m}$ the polynomials of the generic system, and by $I$ the ideal they generate, in $\K[\mathbf{a}][\mathbf{X}]$.

  Since the Hilbert series~\eqref{eq:2} characterizes regular sequences (\cite[cor.~3.2]{Stanley78}), the sequence $(f_{i})$ is regular if and only if the ideal $I$ contains all monomials of $W$-degree between $\HI(I)+1$ and  $\HI(I)+ \max\{w_{i}\}$, where $\HI(I)$ is given by $\sum (d_{i} - w_{i})$.
  This expresses that a given set of linear equations has solutions, and so it can be coded as some determinants being non-zero.
\end{proof}
\vspace{-0.2cm}

There are some systems of $W$-degree for which there is no regular sequence.
The reason is that because of the weights, for some systems of $W$-degrees, there exists no or very few monomials.
For example, take $n=2$, $W=(1,2)$ and $\D=(1,1)$.
All quasi-homogeneous polynomials of $W$-degree $1$ are in $\K X$, so there is no regular sequence of quasi-homogeneous polynomials with these $W$-degrees.

However, if we only consider ``reasonable'' systems of $W$-degrees, that is systems of $W$-degrees for which there exists a regular sequence, regular sequences form a Zariski-dense subset from the above.

\vspace{-0.2cm}
\begin{remark}
  \label{rem:regularity1}
  A sufficient condition for example is to take weighted degrees such that
  $d_{1}$ is divisible by $w_{1}$, \ldots, $d_{n}$ is divisible by $w_{n}$.
  Thus we can define the sequence $X_{1}^{\sfrac{d_{1}}{w_{1}}},\dots,X_{n}^{\sfrac{d_{n}}{w_{n}}}$, which is regular, and so for such systems of weight, the regularity condition is generic.
\end{remark}
\vspace{-0.2cm}

We only proved the genericity for quasi-homogeneous sequences of length $n$, the more general case of a sequence of length $m \leq n$ will be proved in section~\ref{sec:noether-position} (remark~\ref{rem:noether1}).

\subsection{Noether position}
\label{sec:noether-position}
To compute the degree and degree of regularity of quasi-homoge\-neous
systems of positive dimension, we will 
assume that the
system $F=(f_{1},\dots,f_{m})$ we consider is in
\emph{Noether position} (as seen in \cite[ch.~13,~sec.~1]{eisenbud95} or \cite[def.~2]{Salvy2012}), i.e. the ideal $I = \langle F \rangle$ satisfies the two following conditions:
\begin{my_itemize}
  \item for $i \leq m$, the canonical image of $X_{i}$ in $\K[\mathbf{X}]/I$ is an algebraic integer over $\K[X_{m+1},\dots,X_{n}]$;
  \item $\K[X_{m+1},\dots,X_{n}] \cap I= 0$.
\end{my_itemize}

%
%
%
%
%
%
\vspace{-0.2cm}
  \begin{lemme}
    \label{lemme:position-noether-reg}
    Let $F= f_{1},\dots,f_{m}$ be a regular quasi-homo\-geneous sequence of polynomials in $\K[X_{1},\dots,X_{n}]$.
    The sequence $F$ is in Noether position if and only if $\Fext := f_{1},\dots,f_{m},X_{m+1},\dots,X_{n}$ is a regular sequence. 
    \myvspace{-0.12cm}
  \end{lemme}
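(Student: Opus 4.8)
The plan is to reduce both sides of the equivalence to a single symmetric condition: that the images of $X_{m+1},\dots,X_{n}$ in $B := \K[\mathbf{X}]/I$ form a \emph{homogeneous system of parameters}. The crucial structural input is that, since $F$ is a regular sequence, $B$ is a complete intersection, hence a Cohen--Macaulay graded $\K$-algebra of Krull dimension $n-m$. Because all weights $w_{i}$ are positive, $B$ is a connected non-negatively graded algebra, so the classical graded commutative-algebra machinery (as in \cite{eisenbud95}) applies verbatim to the $W$-grading.

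First I would rewrite the right-hand condition. Since $f_{1},\dots,f_{m}$ is already assumed regular, by the very definition of a regular sequence $\Fext$ is regular if and only if the images of $X_{m+1},\dots,X_{n}$ form a regular sequence on $B$. As this is a family of $n-m = \dim B$ homogeneous elements, in the Cohen--Macaulay ring $B$ being a regular sequence is equivalent to being a homogeneous system of parameters: a homogeneous s.o.p. in a graded Cohen--Macaulay ring is automatically a regular sequence, and conversely a regular sequence of length equal to the dimension cuts the dimension down to $0$ and hence is an s.o.p. This settles one half of the dictionary.

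Next I would match the left-hand condition to the same statement. If $X_{m+1},\dots,X_{n}$ is a homogeneous s.o.p., then $B/(X_{m+1},\dots,X_{n})B$ is Artinian, and the graded Noether normalization lemma (graded Nakayama) forces $B$ to be a finite module over the subalgebra $R'\subseteq B$ generated by these images; finiteness gives exactly the integral dependence in the first Noether-position axiom, and comparing dimensions ($\dim R' = \dim B = n-m = \dim \K[X_{m+1},\dots,X_{n}]$) forces the map $\K[X_{m+1},\dots,X_{n}]\to B$ to be injective, whose kernel is precisely $\K[X_{m+1},\dots,X_{n}]\cap I$, i.e. the second axiom. Conversely, the two Noether-position axioms say exactly that $B$ is a finite module over an injective copy of $\K[X_{m+1},\dots,X_{n}]$, whence $B/(X_{m+1},\dots,X_{n})B$ is finite-dimensional over $\K$ and the elements form a homogeneous s.o.p. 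Chaining the two equivalences yields the lemma.

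The main obstacle is not any single computation but ensuring that the imported facts genuinely hold in the weighted-graded category: that a complete intersection is Cohen--Macaulay, that homogeneous systems of parameters and maximal regular sequences coincide in a graded Cohen--Macaulay ring, and that graded Noether normalization applies. Each is standard for connected non-negatively graded $\K$-algebras, and positive weights keep $B$ in that class, so the transfer is routine; the care needed is mostly in checking that the $W$-grading is preserved at every step (that the $f_{i}$ and the $X_{j}$ stay homogeneous for it), so that ``homogeneous'' always refers to the same grading.
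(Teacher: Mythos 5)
Your proof is correct, but it takes a genuinely different route from the paper's. The paper argues the two directions separately and asymmetrically: for Noether position $\Rightarrow$ regularity of $\Fext$ it invokes the geometric characterization of Noether position (the projection $\pi : V(I) \to V(\langle X_{1},\dots,X_{m}\rangle)$ is finite and surjective), so that $V(\langle \Fext \rangle)=\pi^{-1}(0)$ is zero-dimensional and the sequence is therefore regular; for the converse it works with the \grevlex order: zero-dimensionality of $\langle \Fext \rangle$ produces pure powers $X_{i}^{n_{i}}$ as leading terms of polynomials that, by the structure of \grevlex, may be taken in $I$ itself, which gives the integrality axiom by induction on $i$, while the axiom $\K[X_{m+1},\dots,X_{n}]\cap I=0$ follows from quasi-homogeneity and regularity (a nonzero element would be a syzygy among $X_{m+1},\dots,X_{n}$). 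You instead collapse both sides onto one symmetric condition --- that the images of $X_{m+1},\dots,X_{n}$ form a homogeneous system of parameters of $B=\K[\mathbf{X}]/I$ --- using that $B$ is a graded complete intersection, hence Cohen--Macaulay, so that homogeneous s.o.p.'s of length $\dim B = n-m$ coincide with regular sequences, and graded Noether normalization to translate the s.o.p. condition into integrality plus injectivity. Your version is more uniform and has the merit of making explicit the commutative algebra that the paper's forward direction uses only implicitly: the step ``zero-dimensional, and so the sequence is regular'' is exactly the Cohen--Macaulay/unmixedness fact you isolate. The price is heavier (though standard) machinery, whose transfer to weighted gradings you rightly justify by positivity of the $w_{i}$; the paper's proof, by contrast, stays inside the Gröbner-basis toolbox of the rest of the article --- its \grevlex argument is also the germ of why the \Wgrevlex order interacts well with Noether position --- at the cost of a two-pronged, less conceptual argument.
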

  \begin{proof}
    Let $I$ be the ideal generated by the $f_{i}$'s.
    The geometric characterization of Noether position (see e.g. \cite{MilneAG}) shows that the canonical projection onto the $m$ first coordinates
    \begin{equation}
    \pi : V(I) \to V(\langle X_{1}, \dots ,X_{m} \rangle)\label{eq:6}
    \end{equation}
    is a surjective morphism with finite fibers.
    This implies that the variety
    $V(\langle \Fext \rangle)$,
    that is $\pi^{-1}(0)$, is zero-dimensional, and so the sequence is regular.

    Conversely, assume $\Fext$ is a regular sequence.
    Let $i \leq m$, we want to show that $X_i$ is integral over the ring $\K[X_{m+1},\dots,X_{n}]$.
    Since $\Fext$ defines a zero-dimensional ideal, there exists $n_i \in \N$ such that $X_{i}^{n_{i}} = \LT(f)$ with
    $f \in \langle \Fext \rangle$ 
    for the \grevlex ordering with $X_1 > \dots > X_n$.
    By definition of the \grevlex ordering, we can assume that $f$ simply belongs to $I$. 
    This shows that every $X_{i}$ is integral over $\K[X_{i+1},\dots,X_{n}]/I$. 
    We get the requested result by induction on $i$ : first, this is clear if $i=m$.
    Now assume that we know that $\K[X_{i},\dots,X_{n}]/I$ is an integral extension of $\K[X_{m+1},\dots,X_{n}]$.
    From the above, we also know that $X_{i-1}$ is integral over $\K[X_{i},\dots,X_{n}]$, and so, since the composition of integral homomorphisms is integral, we get the requested result.
    
    Finally, we want to check the second part of the definition of Noether position.
    Assume that there is a non-zero polynomial in $\K[X_{m+1},\dots,X_{n}] \cap I$, since the ideal is quasi-homogeneous, we can assume this polynomial to be quasi-homogeneous.
    Either this polynomial is of degree 0, or it is a non-trivial syzygy between $X_{m+1},\dots,X_{n}$.
    So in any case, it contradicts the regularity hypothesis.
  \end{proof}
\vspace{-0.25cm}

  %

As we did for regular sequences, we first show how we can evaluate the degree and degree of regularity of a sequence in Noether position, and then we show that the Noether position property is generic under some assumptions on the $W$-degree of the polynomials.
\vspace{-0.2cm}
\begin{theorem}
  Let $W=(w_{1},\dots,w_{n})$ be a system of weights, and $f_{1},\dots,f_{m}$ a regular sequence in Noether position, of quasi-homoge\-neous polynomials of $W$-degrees $(d_{1},\dots,d_{m})$.
  The same way we did above, we denote by $I$ 
  the ideal generated by the $f_{i}$'s
  .
  Then we have $\deg(I) = \prod_{i=1}^{m} \frac{d_{i}}{w_{i}}$ and
  $\dreg(I) \leq \sum_{i=1}^{m} \big(d_{i} - w_{i}\big) + \max\{w_{i}\}$.
\end{theorem}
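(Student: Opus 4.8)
The plan is to reduce this positive-dimensional statement to the zero-dimensional Theorem~\ref{thm:zerodim-values} by adjoining the ``missing'' coordinates. By Lemma~\ref{lemme:position-noether-reg}, the assumption that $F$ is a regular sequence in Noether position is \emph{equivalent} to $\Fext = f_{1},\dots,f_{m},X_{m+1},\dots,X_{n}$ being a regular sequence, and since $\Fext$ has length $n$ it generates a zero-dimensional ideal. As $\deg_{W}(X_{j}) = w_{j}$, the sequence $\Fext$ is $W$-homogeneous of $W$-degrees $(d_{1},\dots,d_{m},w_{m+1},\dots,w_{n})$, so Theorem~\ref{thm:zerodim-values} applies verbatim and yields $\deg(\langle\Fext\rangle) = \prod_{i=1}^{m}\frac{d_{i}}{w_{i}}\cdot\prod_{j=m+1}^{n}\frac{w_{j}}{w_{j}} = \prod_{i=1}^{m}\frac{d_{i}}{w_{i}}$, together with $\dreg(\Fext) \le \sum_{i=1}^{m}(d_{i}-w_{i}) + \max\{w_{i}\}$, the contributions $w_{j}-w_{j}$ of the extra variables vanishing. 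It then remains to transfer both quantities from $\langle\Fext\rangle$ back to $I$.

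For the degree I would argue algebraically, which both matches the toolkit of the paper and avoids delicate bookkeeping with the weighted projective variety. Since $X_{m+1},\dots,X_{n}$ form a regular sequence on $A/I$, multiplying the Hilbert series by the corresponding factors gives $\HS_{A/\langle\Fext\rangle}(t) = \HS_{A/I}(t)\prod_{j=m+1}^{n}(1-t^{w_{j}})$, a polynomial whose value at $t=1$ is $\dim_{\K}A/\langle\Fext\rangle = \prod_{i=1}^{m}\frac{d_{i}}{w_{i}}$. Geometrically this is nothing but the statement that $\deg(V(I))$ is preserved under the section by $V(\langle X_{m+1},\dots,X_{n}\rangle)$: Noether position makes the projection $\pi$ of~\eqref{eq:6} onto the last $n-m$ coordinates finite, so $A/I$ is a free (Cohen--Macaulay over regular) module over $\K[X_{m+1},\dots,X_{n}]$, its rank equals $\deg(V(I))$, and the fibre $\pi^{-1}(0) = V(\langle\Fext\rangle)$ has the same length. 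Either route is routine once finiteness of $\pi$ is invoked.

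The degree of regularity is the delicate step. I would fix the \Wgrevlex ordering in which the free variables $X_{m+1},\dots,X_{n}$ are the \emph{smallest}, and prove that $G \cup \{X_{m+1},\dots,X_{n}\}$ is a Gröbner basis of $\langle\Fext\rangle$ whenever $G$ is one of $I$, i.e. $\LT(\langle\Fext\rangle) = \LT(I) + \langle X_{m+1},\dots,X_{n}\rangle$. The key ingredient is the classical property of \grevlex that, when $X_{j}$ is the smallest variable, $\LT(I : X_{j}) = \LT(I) : X_{j}$; since each $X_{j}$ with $j>m$ is a nonzerodivisor on the relevant quotient (which is exactly what Noether position supplies), this forces no minimal generator of $\LT(I)$ to involve any of $X_{m+1},\dots,X_{n}$. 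Consequently the minimal generators of $\LT(\langle\Fext\rangle)$ are those of $\LT(I)$ --- all of $W$-degree at most $\dreg(I)$ --- together with the variables $X_{j}$ of $W$-degree $w_{j}$, whence $\dreg(I)\le\dreg(\Fext)$ and the announced bound follows.

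The main obstacle is precisely this comparison of leading-term ideals: I expect the degree identity and the reduction to the zero-dimensional case to be essentially formal, but showing that the free variables never occur among the minimal generators of $\LT(I)$ requires the grevlex-specific colon-ideal identity together with a careful use of the regularity statement of Lemma~\ref{lemme:position-noether-reg}. This is the weighted analogue of the homogeneous analysis of~\cite{Bar04}, and I would lean on that analogy to keep the bookkeeping under control.
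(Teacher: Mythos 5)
Your proposal is correct, and its skeleton --- pass to $\Fext$ via Lemma~\ref{lemme:position-noether-reg}, note that the added variables contribute trivially to the formulas of Theorem~\ref{thm:zerodim-values}, then transfer the degree and the degree of regularity back to $I$ --- is exactly the paper's. The two transfer steps differ in execution. For the degree, the paper merely asserts that cutting $V(I)$ by the non-zero-divisor hyperplanes $X_{j}=0$, $j>m$, preserves the degree; your Hilbert-series computation (multiply $\HS_{A/I}$ by $\prod_{j>m}(1-t^{w_{j}})$ and evaluate at $t=1$) is the same fact made quantitative, and is if anything cleaner. For the degree of regularity the routes genuinely diverge: the paper stays at the level of the algorithm, observing that every critical pair of an \F5 run on $F$ also occurs in the run on $\Fext$, so $\dreg(F)\le\dreg(\Fext)$ directly from the definition of $\dreg$ as the top degree of an \F5 run; you instead prove the structural statement $\LT(\langle\Fext\rangle)=\LT(I)+\langle X_{m+1},\dots,X_{n}\rangle$ together with the fact that no minimal generator of $\LT(I)$ involves $X_{m+1},\dots,X_{n}$, via the \Wgrevlex colon-ideal property and the non-zero-divisor hypothesis. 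Your statement is stronger, and it is essentially the weighted analogue of \cite[prop.~9]{Salvy2012}, which the paper itself invokes later (section~\ref{sec:compl-fine-fcinq}) for its refined complexity analysis --- so nothing you use is foreign to the paper, but you are re-proving a heavier lemma than this theorem needs. One bookkeeping point should be made explicit in your write-up: your argument bounds the maximal $W$-degree of elements of the reduced Gröbner basis of $I$, whereas $\dreg$ is defined as the highest degree reached in an \F5 run; to close that gap you need the same remark as in the proof of Theorem~\ref{thm:zerodim-values}, namely that under~\ref{item:H1} there are no reductions to zero, so \F5 indeed stops at the degree of the largest basis element. The paper's critical-pair argument sidesteps this last step because it compares the two runs directly.
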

\begin{proof}
  Let us 
  denote by $I'$ the ideal generated by $F_{\mathrm{ext}}$.
  The degree of the ideal $I'$ is the same as that of  $I$, because the variety it defines is the intersection of $V(I)$ with some non-zero-divisor hyperplanes.
  Furthermore, all critical pairs appearing in a run of \F5 on $F$ will also appear in a run of \F5 on $F_{\mathrm{ext}}$, ensuring that $\dreg(F)\leq \dreg(F_{\mathrm{ext}})$.

  But since by Noether position, the family $F_{\mathrm{ext}}$ defines a zero-dimensional variety, we can use the previous computations to deduce its degree of regularity and the degree of $I'$.\qed
\end{proof}
\begin{lemme}
  \label{lemme:position-noether-gen}
  Let $n$ be a positive integer, and consider the algebra $A := \K[X_{1},\dots,X_{n}]$, graded with respect to the system of weights $W = (w_{1},\dots,w_{n})$.
  Systems in Noether position form a Zariski-open subset of all systems of quasi-homogeneous polynomials of given $W$-degrees in $A$.
\end{lemme}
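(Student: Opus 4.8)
The plan is to reduce the statement to the two facts already available: the characterization of Noether position for regular sequences (Lemma~\ref{lemme:position-noether-reg}) and the openness of the regular locus in length $n$ (Lemma~\ref{lemme:reg_sequence_gen}). The bridge between them is the map $\phi$ that appends the fixed coordinate forms, sending a system $F = (f_{1},\dots,f_{m})$ of $W$-degrees $(d_{1},\dots,d_{m})$ to $\Fext = (f_{1},\dots,f_{m},X_{m+1},\dots,X_{n})$, a system of $W$-degrees $(d_{1},\dots,d_{m},w_{m+1},\dots,w_{n})$. In coordinates on the two parameter spaces of coefficients, $\phi$ simply fixes the coefficients of the last $n-m$ polynomials to those of $X_{m+1},\dots,X_{n}$ and leaves the first $m$ untouched; it is therefore an affine-linear embedding, in particular a morphism, hence continuous for the Zariski topology.

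First I would prove that, as subsets of the parameter space of systems of $W$-degrees $(d_{1},\dots,d_{m})$,
\[
  \{F : F \text{ is in Noether position}\} = \phi^{-1}\big(\{G : G \text{ is a regular sequence of length } n\}\big).
\]
For the inclusion $\subseteq$, the key observation is that Noether position already forces regularity: if $F$ is in Noether position then $\K[\mathbf{X}]/I$ is a finite module over $\K[X_{m+1},\dots,X_{n}]$ through an injective map (the two defining conditions), so $\dim V(I) = n-m$ and $I$ has codimension $m$; since $\K[\mathbf{X}]$ is Cohen--Macaulay, an ideal of codimension $m$ generated by $m$ elements is generated by a regular sequence, hence $F$ is regular. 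Lemma~\ref{lemme:position-noether-reg} then applies and gives that $\Fext = \phi(F)$ is regular. For the reverse inclusion $\supseteq$, if $\phi(F) = \Fext$ is regular then its initial segment $F$ is regular (any prefix of a regular sequence is regular), and Lemma~\ref{lemme:position-noether-reg} again yields that $F$ is in Noether position.

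With this set-theoretic identity in hand the conclusion is immediate: by Lemma~\ref{lemme:reg_sequence_gen} the set of regular sequences of length $n$ is Zariski-open, and the preimage of a Zariski-open set under the morphism $\phi$ is Zariski-open, so systems in Noether position form a Zariski-open subset of all systems of quasi-homogeneous polynomials of the given $W$-degrees.

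I expect the only genuinely delicate point to be the implication ``Noether position $\Rightarrow$ regular'', which is precisely what lets us invoke Lemma~\ref{lemme:position-noether-reg} uniformly over the whole parameter space rather than only over its regular locus; it rests on the unmixedness (Cohen--Macaulay) property of $\K[\mathbf{X}]$ together with the fact that the integral-dependence condition pins the dimension of $V(I)$ to $n-m$. Everything else — that $\phi$ is a morphism and that prefixes of regular sequences are regular — is routine.
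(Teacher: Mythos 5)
Your proposal is correct and follows essentially the same route as the paper: both proofs identify the Noether-position locus with the locus where $\Fext = (f_{1},\dots,f_{m},X_{m+1},\dots,X_{n})$ is a regular sequence, and both obtain openness of that locus from the determinantal (Hilbert-series) characterization of zero-dimensional regular sequences underlying Lemma~\ref{lemme:reg_sequence_gen} --- your pullback along the affine embedding $\phi$ is just a structured rephrasing of the paper's direct re-application of that characterization to $\Fext$. The one point where you go beyond the paper's own text is the implication ``Noether position $\Rightarrow$ regular'', which you justify by the Cohen--Macaulay/height argument; the paper's proof elides this direction (its ``by definition'' implicitly treats Lemma~\ref{lemme:position-noether-reg} as an unconditional equivalence, although that lemma is stated only for regular $F$), so on this step your write-up is the more complete one.
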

\begin{proof}
  Let $F=(f_{1},\dots,f_{m})$ be $m$ generic quasi-homoge\-neous
  polynomials, with coefficients in $\K[\mathbf{a}]$.  We use the same
  characterization of a zero-dimensional regular sequence as we did in
  the proof of Lemma~\ref{lemme:reg_sequence_gen}.  It allows us to
  express the regularity condition for the sequence
  $(f_{1},\dots,f_{m},X_{m+1},\dots,X_{n})$ as some determinants
  being non-zero, which by definition, shows that the condition of
  being in Noether position is an open condition.
\end{proof}

\vspace{-0.2cm}
Since a sequence in Noether position is in particular a regular sequence, we are confronted with the same problem as for the genericity of regular sequences, that is the possible emptiness of the condition.
However, it is still true that for ``reasonable'' systems of $W$-degrees, i.e. systems of $W$-degrees for which there exists enough monomials, sequences in Noether position do exist, and thus form a Zariski-dense subset of all sequences.
For example, since the sequence $X_{1}^{\sfrac{d_{1}}{w_{1}}},\dots,X_{m}^{\sfrac{d_{m}}{w_{m}}}$ is in Noether position, the sufficient condition given in Remark~\ref{rem:regularity1} is also sufficient to ensure that sequences in Noether position are Zariski-dense.

\vspace{-0.2cm}
\begin{remark}
  \label{rem:noether1}
  Any sequence in Noether position is in particular a regular sequence, so Lemma~\ref{lemme:position-noether-gen} proves that, under the same assumption on the degree, regular sequences of length $m \leq n$ are generic among quasi-homogeneous sequences of given $W$-degree.
\end{remark}
\vspace{-0.4cm}
\section{Computing Gröbner bases}
\label{sec:comp-grobn-basis}

\vspace{-0.3cm}
\subsection{Using the standard algorithms on the homogenized system}
\label{sec:using-standard-algorithms}

As we said before, in order to apply the \F5 algorithm to a quasi-homogeneous system, we may run it through $\hom{W}$.
This is shown by the following proposition.
\vspace{-0.2cm}
\begin{prop}
  \label{lemme:lemme:passage_par_jW}
  Let $F=(f_{1},\dots,f_{m})$ be a family of polynomials in $\K[X_{1},\dots,X_{n}]$, assumed to be quasi-homogeneous for a system of weights $W = (w_{1},\dots,w_{n})$.
  Let $\ordre1$ be a monomial order, $G$ the reduced Gröbner basis of $\hom{W}(F)$ for this order, and $\ordre2$ the pullback of $\ordre1$ through $\hom{W}$.
  Then
  \begin{my_enumerate}
    \item all elements of $G$ are in the image of $\hom{W}$;
    \item the family $G' := \hom[-1]{W}(G)$ is a reduced Gröbner basis of the system $F$ for the order $\ordre2$.
  \end{my_enumerate}
\end{prop}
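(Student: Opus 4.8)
The plan is to exploit the fact that the image of $\hom{W}$ is exactly the degree-$0$ part of $\K[t_1,\dots,t_n]$ for the multigrading by the finite abelian group $\Gamma := \prod_{i=1}^{n}\Z/w_i\Z$, where a monomial $t^\beta$ is assigned the multidegree $(\beta_i \bmod w_i)_i$. Indeed $\hom{W}(\mathbf{X}^\alpha)=t_1^{w_1\alpha_1}\cdots t_n^{w_n\alpha_n}$, so a monomial lies in the image if and only if each of its exponents is divisible by the corresponding weight, i.e. if and only if it has multidegree $0$; call such monomials \emph{good}. Since each generator $\hom{W}(f_i)$ is good, and since the $\Gamma$-homogeneous components of any element $\sum_i h_i\,\hom{W}(f_i)$ of $J:=\langle \hom{W}(F)\rangle$ are obtained simply by replacing the $h_i$ with their own components, the ideal $J$ is $\Gamma$-homogeneous. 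This is the structure I would use throughout.

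For item (1) I would show that the whole reduced Gröbner basis $G$ of $J$ sits in multidegree $0$. The crux is the elementary observation that the least common multiple of two good monomials is again good, and that when one good monomial divides another the quotient is good as well. Running Buchberger's algorithm from the good generators $\hom{W}(f_i)$, every $\Spol$ of two multidegree-$0$ polynomials is again of multidegree $0$ (its leading monomials are good, hence so is their lcm, and so are the cofactors), and every reduction step by a multidegree-$0$ polynomial preserves multidegree $0$. Thus no monomial of nonzero multidegree is ever created, and the output — hence the unique reduced basis $G$ — consists of good polynomials, i.e. of elements of the image of $\hom{W}$. Equivalently, one may invoke the standard fact that a $\Gamma$-homogeneous ideal admits a $\Gamma$-homogeneous reduced Gröbner basis and then note that the lcm computation keeps the degree at $0$. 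I expect this step — pinning the basis down to multidegree \emph{exactly} $0$, rather than merely to some single multidegree — to be the main obstacle, and the ``lcm of good is good'' remark is precisely what resolves it.

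For item (2) I would first record three compatibility properties of $\hom{W}$: it induces an order-preserving bijection from the monomials of $\K[\mathbf{X}]$, ordered by $\ordre2$, onto the good monomials, ordered by $\ordre1$ (this is exactly the definition of the pullback $\ordre2$); this bijection both preserves and reflects divisibility, since $w_i\gamma_i\le w_i\gamma'_i \iff \gamma_i\le\gamma'_i$; and it commutes with taking leading terms, i.e. $\hom{W}(\LT_{\ordre2}(f))=\LT_{\ordre1}(\hom{W}(f))$ for every $f$. Granting item (1), each $g'\in G'$ satisfies $\hom{W}(g')=g\in J$; writing $g=\sum_i h_i\,\hom{W}(f_i)$ and keeping only the multidegree-$0$ parts of the $h_i$ (which lie in the image) lets me pull the relation back through the injective morphism $\hom{W}$ and conclude $g'\in I:=\langle F\rangle$. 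To see that $G'$ is a Gröbner basis, I take $f\in I$, push it to $\hom{W}(f)\in J$, use that $\LT_{\ordre1}(\hom{W}(f))$ is divisible by some $\LT_{\ordre1}(g)$ with $g\in G$, and transport this divisibility back via the reflection property to obtain $\LT_{\ordre2}(g')\mid\LT_{\ordre2}(f)$. Finally, monicity and the reducedness condition transfer term by term through the same monomial bijection, so $G'$ is reduced. The only nonroutine ingredients here are the divisibility-reflection property and the multidegree-$0$ bookkeeping inherited from item (1).
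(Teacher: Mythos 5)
Your proof is correct, but it is organized differently from the paper's. The paper gives a simulation argument: since $\hom{W}$ commutes with every operation Buchberger's algorithm performs (computation of S-polynomials, divisibility tests, multiplications, additions), a run of Buchberger's algorithm on $\hom{W}(F)$ is, step by step, the image under $\hom{W}$ of a run on $F$; both items then follow at once from the uniqueness of the reduced Gröbner basis. You instead characterize the image of $\hom{W}$ intrinsically, as the multidegree-$0$ part of $\K[t_1,\dots,t_n]$ for the grading by $\Gamma=\prod_{i}\Z/w_i\Z$, prove item (1) entirely inside $\K[t_1,\dots,t_n]$ (Buchberger's algorithm never leaves multidegree $0$, because lcm's of good monomials and quotients of good monomials by good divisors are good), and only then transport the Gröbner-basis and reducedness properties back through the order- and divisibility-preserving monomial bijection to get item (2). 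The elementary facts doing the work are the same in both proofs --- your ``lcm of good is good'' is exactly what makes the paper's claim that $\hom{W}$ preserves S-polynomials and divisibility true --- but the packaging differs genuinely: your route buys an intrinsic description of which polynomials of $\K[t_1,\dots,t_n]$ lie in the image (item (1) is proved without ever referring to a parallel computation on $F$, and the $\Gamma$-grading statement is of independent use), at the cost of redoing by hand, in item (2), the pullback of divisibility, leading terms, membership in $\langle F\rangle$, and reducedness, all of which the paper's parallel-run argument gets for free.
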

\begin{proof}
  The morphism $\hom{W}$ preserves $S$-polynomials, in the sense that
  $\Spol(\hom{W}(f),\hom{W}(g)) = \hom{W}(\Spol(f,g))$.
  Recall that we can compute a reduced Gröbner basis by running the Buchberger algorithm, which involves only multiplications, additions, tests of divisibility and computation of $S$-polynomials.
  Since all these operations are compatible with $\hom{W}$, if we run the Buchberger algorithm on both $F$ and $\hom{W}(F)$ simultaneously, they will follow exactly the same computations up to application of $\hom{W}$.
  The consequences on the final reduced Gröbner basis follow.
\end{proof}
\vspace{-0.2cm}

In practice, if we want to compute a \lex Gröbner basis of $F$, we generate the system $\tilde{F}=\hom{W}(F)$, we compute a $\grevlex$ basis
$\tilde{G_1}$ of $\tilde{F}$ with \F5, and then we compute a \lex Gröbner basis $\tilde{G_{2}}$ of $\tilde{F}$ with FGLM.
In the end, we get a \lex Gröbner basis of $\tilde{F}$, which we turn into a \lex Gröbner basis of $F$ via $\hom[-1]{W}$.


\subsection{Direct algorithms} 
\label{sec:matrix-f5-algorithm}

We can now explain why algorithm FGLM becomes a bottleneck with the above strategy.
Indeed, we have seen that going through $\hom{W}$ increases the Bézout bound of the system by a factor $\prod_{i=1}^{n}w_{i}$, and recall that the complexity of the FGLM step is polynomial in that bound.

Here is a workaround.
In the above process, we can apply $\hom[-1]{W}$ to the basis $\tilde{G_{1}}$ and thus obtain a \Wgrevlex basis $G_{1}$ of $F$.
We can then run FGLM on that basis to obtain a \lex basis of $F$.
Thus, we can avoid the problem of a greater degree of the ideal on the complexity of the FGLM step.

Algorithm $\F5$ operates by computing $S$-pairs, and as such, the argument of the proof of proposition~\ref{lemme:lemme:passage_par_jW} can be adapted, showing that going through $\hom{W}$ is equivalent to running a \F5 algorithm following weighted degree instead of total degree.
However, to evaluate the complexity of the \F5 algorithm, we instead study a less-efficient variant called Matrix-\F5 (described for example in \cite{FR09}), which needs to be adapted to the quasi-homogeneous case.
All we need to do is change the algorithm a little, in order to consider directly the variables with their weight.
The modified algorithm is algorithm~\ref{algo:F5matqh} opposite.
The function \textsc{F5Criterion}$(\mu,i,\mathcal{M})$ implements the \F5-criterion described in~\cite{Fau02a}: it evaluates to false if and only if $\mu$ is the leading term of a line of the matrix $\mathcal{M}_{d-d_{i},i-1}$.
The function \textsc{EchelonForm}$(M)$ reduces the matrix $M$ to row-echelon form, not allowing any row swap.

\begin{algorithm}[h]
  \caption{Matrix-\F5 ($W$-homogeneous version)}
  \label{algo:F5matqh}
  \KwIn{$
  \begin{cases}
    f_{1},\dots,f_{m} \text{ $W$-homogeneous polynomials}\\
    \quad\quad\quad\text{with $W$-degrees $d_{1},\dots,d_{m}$}\\
    d_{\text{max}} \in \N
  \end{cases}$}
\KwOut{$G$ Gröbner basis of $\langle f_{1},\dots,f_{m} \rangle$ up to \hbox{$W$-degree} $d_{\text{max}}$}
$G \leftarrow \{f_{1},\dots,f_{m}\}$ \;
\For{$d = 1$ \KwTo $d_\text{max}$}{
  $\mathcal{M}_{d,0} \leftarrow$ matrix with 0 lines\;
  \For{$i = 1$ \KwTo $m$}{
    \If{$d = d_{i}$}{
      $\mathcal{M}_{d,i} \leftarrow \tilde{\mathcal{M}_{d,i-1}} \append$ line $f_{i}$ with label $(1,f_{i})$\;
    }
    \ElseIf{$d > d_{i}$}{
      $\mathcal{M}_{d,i} \leftarrow \tilde{\mathcal{M}_{d,i-1}}$\;
      \For{$j = 1$ \KwTo $n$}{
        \ForAll{lines $f$ of $\tilde{\mathcal{M}_{d-w_{j},i}}$ with label $(e,f_{i})$ s.t. the biggest variable dividing $e$ is $x_{j}$}{
          \For{$k = n$ \KwDownTo $j$}{
            \If{\textsc{F5Criterion}$(x_{k}e,i,\mathcal{M})$}{
              $\mathcal{M}_{d,i} \leftarrow \mathcal{M}_{d,i} \append x_{k}f$ with label $(x_{k}e,f_{i})$\;
            }
          }
      }
    }
    }
  }
 $\tilde{\mathcal{M}_{d,m}} \leftarrow \text{\textsc{EchelonForm}}(\mathcal{M}_{d,m})$ \;
 For any line having been reduced to a non-zero polynomial, append it to $G$ \;
}
\Return{$G$}
\end{algorithm}

\vspace{-0.2cm}
\subsection{First complexity bounds}
\label{sec:first-compl-bounds}
Let $F=(f_{1},\dots,f_{n})$ be a system of $W$-homogeneous
polynomials in $\K[X_{1},\dots,X_{n}]$, and let $I$ be the ideal
generated by $F$, $D$ the degree of $I$, $\dreg$ its degree of regularity
and $\HI$ its index of regularity.  
The classical complexity bounds of Matrix-\F5 (for a regular system) and FGLM are
\begin{equation}
  C_{F_{5}} = O\left(\dreg M_{\dreg,W}(n)^{\omega}\right) ;\;
  C_{FGLM} = O\left(n D^{3}\right),
\end{equation}
where $M_{d,W}(n)$ stands for the number of monomials of $W$-degree $d$ in $n$ variables (see for example \cite{Bar04} for \F5 and \cite{FGLM} for FGLM).

Assuming the system $F$ is a regular sequence, we have already seen
the following estimates: 
\begin{equation}
\dreg \leq \ireg + \max\{w_{i}\}\;;\hspace{0.5cm}
D = \frac{\prod_{i=1}^{n}{d_{i}}}{\prod_{i=1}^{n} w_{i}}.\label{eq:12}
\end{equation}
If we compare these values with their equivalent with the system of weights $\mathbf{1}$, we notice a significant gain in theoretical complexity bounds for both the FGLM and \F5 algorithms.

But this gain in complexity for \F5 does not take into account the size of the computed matrices.
That size is necessarily reduced, because the number of monomials of given $W$-degree is much smaller than the number of monomials of given \hbox{$\mathbf{1}$-degree}.
The point of the following lemma is to evaluate this gain.\myvspace{-0.2cm}
\begin{lemme}
  \label{lemme:nombre_monomes_W}
  Let $W = (w_{1},\dots,w_{n})$ be a system of weights, and for any $i$, let $W_{i} = (w_{1},\dots,w_{i})$.
  For any integer $d$, we denote by $M_{d,W}(n)$ the number of monomials of $W$-degree $d$, that is the size of the matrix of $W$-degree $d$.
  Let $\delta := \pgcdfam{W}$, $P := \prod_{i=1}^{n}w_{i}$, $S_{i}$ the integer defined recursively as following:
    \begin{gather}
    \label{eq:119}
    S_{1} = 0,\; 
    S_{i} = S_{i-1} + w_{i} \cdot\frac{\pgcdfam{W_{i-1}}}{\pgcdfam{W_{i}}} \text{ for } i \geq 2 
  \end{gather}
  and $T_{i}$ the integer defined recursively as following:
  \begin{equation}
    \label{eq:127}
    T_{1} = 0,\;
    T_{i} = T_{i-1} + w_{i} \cdot\left(\frac{\pgcdfam{W_{i-1}}}{\pgcdfam{W_{i}}}-1\right) -1 \text{ for } i \geq 2.
  \end{equation}
  Then the number of monomials of $W$-degree $d$ is bounded above and
  below by:
  \begin{equation}
  \frac{\delta}{P}M_{d-T_{n}-n+1,\mathbf{1}}(n)
    \leq M_{d,W}(n)\leq     \frac{\delta}{P} M_{d+ S_{n} - n +1,\mathbf{1}}(n).\label{eq:10}
  \end{equation}
\end{lemme}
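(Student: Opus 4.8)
The plan is to treat $M_{d,W}(n)$ as a Sylvester denumerant, that is as the coefficient extraction
\[
  M_{d,W}(n) = [t^{d}]\,\prod_{i=1}^{n}\frac{1}{1-t^{w_{i}}},
\]
which immediately yields the one-variable-at-a-time recursion $M_{d,W_{i}}(i)=\sum_{k\geq 0} M_{d-kw_{i},\,W_{i-1}}(i-1)$, and to prove both inequalities simultaneously by induction on $i$, comparing at each stage against the unweighted count $M_{m,\mathbf{1}}(j)=\binom{m+j-1}{j-1}$. The base case $i=1$ is immediate, since $M_{d,(w_{1})}(1)=1=M_{d,\mathbf{1}}(1)$ whenever $w_{1}\divides d$. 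Note that the statement only carries information on \emph{admissible} degrees, i.e. on those $d$ with $\delta\divides d$; for the remaining $d$ one has $M_{d,W}(n)=0$ and the inequalities are to be read on the degrees actually attained by monomials.

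The crucial point in the inductive step is that the summands of $\sum_{k\geq 0} M_{d-kw_{i},W_{i-1}}(i-1)$ are \emph{not} all nonzero: a term survives only when $\pgcdfam{W_{i-1}}$ divides $d-kw_{i}$, and for admissible $d$ the integers $k$ solving this congruence form an arithmetic progression of common difference $r_{i}:=\pgcdfam{W_{i-1}}/\pgcdfam{W_{i}}$. Hence the nonzero terms are sampled at arguments $d-kw_{i}$ spaced by exactly $w_{i}r_{i}$, which is precisely the increment $S_{i}-S_{i-1}$ in the recursion defining $S_{i}$. This same sparsity also explains the prefactor $\delta/P$: each new variable contributes a density $\tfrac{1}{w_{i}}\cdot\tfrac{\pgcdfam{W_{i}}}{\pgcdfam{W_{i-1}}}=\tfrac{1}{w_{i}r_{i}}$, and these factors telescope to $\delta/P$.

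Feeding the induction hypothesis into this sparse sum reduces both bounds to comparing a sum of binomial coefficients $\binom{a_{j}}{i-2}$, sampled on an arithmetic progression $a_{j}$ of step $w_{i}r_{i}$, against the full dense sum, which the hockey-stick identity $\sum_{m=0}^{M}\binom{m+i-2}{i-2}=\binom{M+i-1}{i-1}=M_{M,\mathbf{1}}(i)$ collapses back to a single unweighted count. For the upper bound I would use that $\binom{m}{i-2}$ is nondecreasing in $m$ to dominate each sampled value by the average of the $w_{i}r_{i}$ consecutive dense terms lying just above it; these blocks tile the whole range and produce exactly the upward shift $S_{i}=S_{i-1}+w_{i}r_{i}$. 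The lower bound is the mirror estimate, dominating each block from below, with the loss of one boundary block and one endpoint term accounting for the corrections $-w_{i}$ and $-1$ in $T_{i}=T_{i-1}+w_{i}(r_{i}-1)-1$.

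The main obstacle is the bookkeeping rather than the ideas: one must track the off-by-one boundary terms of the Riemann/hockey-stick comparison carefully enough to land on the stated shifts $S_{n}$ and $T_{n}$, and not merely on asymptotically equivalent ones, while propagating the gcd-induced sparsity (the step $r_{i}$, and the admissibility $\delta\divides d$) consistently through every level of the induction. Everything else — the generating function, the recursion, the telescoping of the density, and the monotonicity of binomial coefficients — is routine.
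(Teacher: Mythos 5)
Your proposal is correct in substance, but it takes a genuinely different route from the paper: the paper's entire proof is a citation of Theorems~3.3 and~3.4 of~\cite{Geir02} (Agnarsson's bounds on Sylvester denumerants), combined with the observation that $M_{d,\mathbf{1}}(n)=\binom{d+n-1}{n-1}$, whereas you re-derive the result from scratch. Your machinery does work and lands on the exact stated shifts: writing $r_{i}=\pgcdfam{W_{i-1}}/\pgcdfam{W_{i}}$ and $q_{i}=w_{i}r_{i}$, the nonzero terms of the recursion $M_{d,W_{i}}(i)=\sum_{k\ge 0}M_{d-kw_{i},W_{i-1}}(i-1)$ occur at $k=k_{0}+jr_{i}$ with $0\le k_{0}\le r_{i}-1$; combining the induction hypothesis, monotonicity of $m\mapsto\binom{m+i-2}{i-2}$, and the hockey-stick identity, the upper bound comes out worst when $k_{0}=0$ (giving exactly $S_{i}=S_{i-1}+q_{i}$) and the lower bound worst when $k_{0}=r_{i}-1$ (giving exactly $T_{i}=T_{i-1}+w_{i}(r_{i}-1)-1$, the $+1$ coming from the hockey-stick endpoint), while the densities $1/q_{i}$ telescope to $\delta/P$ as you say. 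One point of bookkeeping you should fix in the write-up: for the upper bound the block associated to a sampled term must contain that term (the $q_{i}$ dense terms \emph{at and above} it, not strictly above), otherwise the case $k_{0}=0$ overshoots the stated shift by one. What your approach buys, besides self-containedness, is that it makes explicit a caveat the paper's blanket ``for any integer $d$'' glosses over: the lower bound in~\eqref{eq:10} genuinely fails for degrees not divisible by $\delta$ (e.g.\ $W=(2,2)$ and $d$ odd gives $M_{d,W}(2)=0$ against a positive right-hand side), and your induction is consistent precisely because the sampled degrees at each level are admissible for the corresponding truncated weight system; the paper inherits whatever hypotheses Agnarsson's theorems carry without surfacing them. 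What the paper's route buys is brevity and deferral of exactly this boundary analysis to the combinatorics literature.
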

\begin{proof}
  This is a consequence of theorems 3.3 and 3.4 in~\cite{Geir02}, if we recall that
  \mbox{$M_{d,\mathbf{1}}(n) = \binom{d+n-1}{d} = \binom{d+n-1}{n-1}$}.
\end{proof}

Note that if $W = \mathbf{1}$, the bounds we get are trivial, which means the complexity bounds we will obtain with them will specialize without any difficulty to the known bounds for the homogeneous case.


Using the notation $S=S_{n}$
, we get this new complexity bound for quasi-homogeneous Matrix-\F5:
\begin{align}
  \label{eq:69}
  C_{F_{5}} &= O\left(\dreg M_{\dreg,W}(n)^{\omega}\right) \\
  \label{eq:71}
  & \begin{aligned}
    = O&\left(\vphantom{\binom{S}{S}^{\omega}}\Big(
      \ireg + \max\{w_{i}\}
     \Big)\right.\\[-0.2cm]
     & \,\,\;\left. \cdot \left[\frac{\delta}{P}
       \binom{\ireg + \max\{w_{i}\} + S - 1}{n-1}
      \right]^{\omega}\right).
  \end{aligned}
\end{align}
On the other hand, the estimate on the degree of a quasi-homoge\-neous variety gives the following complexity bound for FGLM:
\begin{equation}
  \textstyle C_{FGLM} = O\left(n \left[\frac{\tilde{D}}{P}\right]^{3}\right),\label{eq:9}
  \end{equation}
where $\tilde{D} = \prod_{i=1}^{n} d_{i}$ is the degree of the ideal $\langle \hom{W}(F) \rangle$.
In the end, for the whole process, we can see that 
the complexity bound for our direct strategy is smaller by a factor of $P^{\omega}$, when compared to the strategy of going through $\hom{W}$.




\subsection{Precise analysis of matrix-{\large\F5}}
\label{sec:compl-fine-fcinq}

Let us now follow more closely the computations occurring in the Matrix-\F5 algorithm, and obtain more accurate complexity bounds.
For this purpose, we take on the computations made in~\cite[ch.~3]{Bar04}, without proving them whenever the proof is an exact transcription of the homogeneous case.

Let $W= (w_{1},\dots,w_{n})$ be a system of weights, and $f_{1},\dots, f_{m}$ a system of quasi-homogeneous polynomials in $\K[X_{1},\dots,X_{n}]$, which we assume satisfies the hypotheses \ref{item:H1} and \ref{item:H2}.
We denote by $(d_{1},\dots,d_{m})$ the respective $W$-degrees of the
polynomials $f_{1},\dots,f_{m}$, and we will assume them to allow the
existence of such systems.

We also denote by:
\begin{my_itemize}
  \item $A_{i} = \K[X_{1},\dots,X_{i}]$, and $A = A_{n}$;
  \item $S_{i}$ the integer defined in Lemma~\ref{lemme:nombre_monomes_W}, and $S = S_{n}$;
  \item $P_{i} = \prod_{j=1}^{i} w_{j}$, and $P = P_{n}$;
  \item $I_{i} = \langle f_{1}, \dots, f_{i}\rangle$, and $I = I_{m}$;
  \item $\tilde{f_{j}} = \hom{W}(f_{j})$;
  \item $\tilde{I_{i}} = \langle \tilde{f_{1}}, \dots, \tilde{f_{i}}\rangle$, and $\tilde{I} = \tilde{I_{m}}$;
  \item $D_{i} = \deg(I_{i}) = \prod_{j=1}^{i} \left(d_{j}/w_{j}\right)$;
  \item $\tilde{D_{i}} = \deg(\tilde{I_{i}}) = \prod_{j=1}^{i} d_{j}$;
  \item $\dreg^{(i)}$ 
  the degree of regularity of $I_{i}$ (or of $\tilde{I_{i}}$)
  ;
  \item $G_{i}$ the \Wgrevlex Gröbner basis of $I_{i}$ as given by Matrix-\F5.
\end{my_itemize}
With these notations, we are going to prove the following theorem:
\begin{theorem}
  \label{thm:comp_fine}
  Let $W=(w_{1},\dots,w_{n})$ be a system of weights, and $f_{1},\dots,f_{m}$ ($m \leq n$) a system of $W$-homogeneous polynomials satisfying~\ref{item:H1} and~\ref{item:H2}.
  Then the complexity of quasi-homogeneous Matrix-\F5 algorithm (algorithm~\ref{algo:F5matqh}) is:
  \begin{equation}
    \label{eq:35}
    C_{\F5}  = O
    \left(
     \sum_{i=2}^{m} (D_{i-1}-D_{i-2}) M_{\dreg^{(i)},W}(i) M_{\dreg^{(i)},W}(n)
    \right)
  \end{equation}
\end{theorem}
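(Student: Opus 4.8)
The plan is to follow the Matrix-\F5 analysis of~\cite[ch.~3]{Bar04} incrementally, tracking the contribution of each polynomial $f_i$ as it is introduced into the computation, and then summing these contributions. The essential idea is to count, at each incremental step $i$, the number of new rows that $f_i$ and its multiples contribute to the matrices, and to bound the cost of reducing those rows. The complexity of reducing an $r \times c$ matrix to row-echelon form (without row swaps) is $O(r\, c^{\,\omega-1})$, so the cost at step $i$ and $W$-degree $d$ is governed by the product of the number of new rows with a power of the number of columns $M_{d,W}(n)$.

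First I would isolate the number of new rows introduced when passing from $I_{i-1}$ to $I_i$. The key combinatorial fact, transcribed from the homogeneous case, is that thanks to the \F5-criterion the rows surviving at step $i$ in $W$-degree $d$ are indexed by monomials in $\K[X_1,\dots,X_i]$ of the appropriate $W$-degree that are not already leading terms from earlier steps; under hypotheses~\ref{item:H1} and~\ref{item:H2}, the count of \emph{genuinely new} rows attributable to $f_i$ is controlled by $D_{i-1}-D_{i-2}$, the jump in the degree sequence, reflecting the Noether-position structure that lets one project onto $X_1,\dots,X_i$. This is precisely where hypothesis~\ref{item:H2} enters: being in Noether position w.r.t.\ $X_1,\dots,X_i$ for each $i$ guarantees that $f_1,\dots,f_i,X_{i+1},\dots,X_n$ is regular (Lemma~\ref{lemme:position-noether-reg}), so that the Hilbert series of the truncated system is the expected product and the row counts factor through $D_i = \prod_{j\le i} d_j/w_j$.

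Next I would assemble the cost. At step $i$, the matrices live in $A_i=\K[X_1,\dots,X_i]$, so the number of columns is $M_{\dreg^{(i)},W}(i)$, while the reduction must be carried out inside the full ambient space, contributing a factor $M_{\dreg^{(i)},W}(n)$; the number of new rows contributes $(D_{i-1}-D_{i-2})$. Multiplying these and summing over $i$ from $2$ to $m$ yields exactly~\eqref{eq:35}. The passage from $W$-degree-by-$W$-degree costs to the single maximal $W$-degree $\dreg^{(i)}$ uses that the dominant contribution comes from the top $W$-degree, together with the monotonicity of $M_{d,W}$, so that summing over degrees $d \le \dreg^{(i)}$ is absorbed into the stated bound up to the constant hidden in the $O(\cdot)$.

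The main obstacle I expect is the faithful bookkeeping of which rows are new at each incremental step: one must verify that the \F5-criterion, in its $W$-weighted form implemented in algorithm~\ref{algo:F5matqh}, eliminates exactly the redundant rows so that the number of surviving independent rows is the claimed $D_{i-1}-D_{i-2}$, and that no row is counted twice across degrees. This requires re-examining the inductive argument of~\cite{Bar04} and checking that each step—the labeling by $(e,f_i)$, the choice of the largest variable dividing $e$, and the criterion test—goes through verbatim once $\mathbf 1$-degree is replaced by $W$-degree. The remaining steps (the Hilbert-series bound on $\dreg^{(i)}$ and the linear-algebra cost model) are routine transcriptions and should not present difficulty.
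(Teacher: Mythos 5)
Your proposal follows the same skeleton as the paper's proof: an incremental, Bardet-style analysis in which $D_{i-1}-D_{i-2}$ new rows are attributed to $f_i$, each charged $M_{\dreg^{(i)},W}(i)$ row operations of cost $M_{\dreg^{(i)},W}(n)$ each, summed over $i$. However, the justification you give for the middle factor is wrong, and that factor is what the theorem is really about. The matrices of Algorithm~\ref{algo:F5matqh} do \emph{not} ``live in $A_i$'': their columns are indexed by \emph{all} monomials of $W$-degree $d$ in the $n$ variables, so the number of columns is $M_{d,W}(n)$ — this is exactly the row length, i.e.\ the cost of one row operation, not an extra ``ambient space'' correction as you describe it. The true origin of the factor $M_{d,W}(i)$, and the place where hypothesis~\ref{item:H2} enters, is the quasi-homogeneous analogue of \cite[prop.~9]{Salvy2012}: under Noether position, the leading term of every polynomial that gets reduced at step $i$ lies in $A_i$; since the monomials of $A_i$ are the \Wgrevlex-largest monomials of their $W$-degree, every pivot row used while reducing such a row also has its leading term in $A_i$, so each reduced row undergoes at most $M_{d,W}(i)$ row operations. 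Your closest statement — that the surviving rows are ``indexed by monomials in $\K[X_1,\dots,X_i]$'' — concerns signatures rather than leading terms, and is false as such (signatures involving $X_n$ pass the \F5-criterion at every step, e.g.\ powers of $X_n$ never lie in $\LT(I_{j-1})$). Without the leading-term property, the product $(D_{i-1}-D_{i-2})\, M_{\dreg^{(i)},W}(i)\, M_{\dreg^{(i)},W}(n)$ is asserted, not proved.

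Two secondary points. First, the cost model $O(r\,c^{\omega-1})$ you invoke at the outset is inconsistent with the bound you are trying to prove: applied with $r = D_{i-1}-D_{i-2}$ and $c = M_{d,W}(n)$ it gives $(D_{i-1}-D_{i-2})\,M_{d,W}(n)^{\omega-1}$, which is not \eqref{eq:35}; the theorem's bound reflects naive elimination with the per-row operation count capped by $M_{d,W}(i)$ as above, so the $\omega$-formulation should be dropped rather than mixed in. Second, the row count $D_{i-1}-D_{i-2}$ needs more than an appeal to ``the jump in the degree sequence'': in the paper it comes from the generating series $B_i(z)=z^{d_i}\prod_{k=1}^{i-1}(1-z^{d_k})/(1-z^{w_k})$ bounding the new basis elements at step $i$ (the quasi-homogeneous transcription of \cite[th.~10]{Salvy2012}, valid by the Hilbert series \eqref{eq:2} of regular sequences), together with the identification of $B_i(1)$ with the degree $D_{i-1}$ of $\langle f_1,\dots,f_{i-1},X_i,\dots,X_n\rangle$; this identification, and the fact that the columns $X_{i+1},\dots,X_n$ contribute no further reductions, is where Lemma~\ref{lemme:position-noether-reg} and \ref{item:H2} are actually used for the row count.
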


We aim at computing precisely how many lines are reduced in a run of the Matrix-\F5 algorithm, that is, the number of polynomials in the returned Gröbner basis.
This is done by the following proposition, which is a weak variant of \cite[th.~10]{Salvy2012}:
\begin{prop}
  Let $(f_{1},\dots,f_{m})$ be a $W$-homogeneous system (w.r.t a system of weights $W$) satisfying the hypotheses~\ref{item:H1} and \ref{item:H2}.
  Let $G_{i}$ be a reduced Gröbner basis of $(f_{1},\dots,f_{i})$ for the \Wgrevlex monomial ordering, for $1 \leq i \leq m$.
  Then the number of polynomials of $W$-degree $d$ in $G_{i}$ whose leading term does not belong to $\LT(G_{i-1})$ is bounded by $b_{d,i}$, defined by the generating series
  \begin{equation}
    \label{eq:1}
    B_{i}(z) = \sum_{d=0}^{\infty} b_{d,i} z^{d} = z^{d_{i}} \prod_{k=1}^{i-1} \frac{1-z^{d_{k}}}{1-z^{w_{k}}}.
  \end{equation}
\end{prop}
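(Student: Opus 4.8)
The plan is to translate the quantity to be bounded into a count of minimal generators of leading-term ideals, and then to evaluate the relevant Hilbert series using the Noether position hypothesis~\ref{item:H2}. Throughout I work with the \Wgrevlex order with $X_1 > \dots > X_n$, and I write $I_j = \langle f_1,\dots,f_j\rangle$. Since $G_i$ is reduced, $\LT(G_i)$ is the set of minimal monomial generators of $\LT(I_i)$; moreover, because $\LT(I_{i-1}) \subseteq \LT(I_i)$, a minimal generator of $\LT(I_i)$ that lies in $\LT(I_{i-1})$ is automatically a minimal generator of $\LT(I_{i-1})$, i.e.\ an element of $\LT(G_{i-1})$. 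Hence the polynomials counted in the statement are exactly those whose leading term is a minimal generator of $\LT(I_i)$ of $W$-degree $d$ not belonging to $\LT(I_{i-1})$; I denote their number by $c_{d,i}$.

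First I would record the structural consequence of~\ref{item:H2}. By Lemma~\ref{lemme:position-noether-reg} applied to the prefix of length $j$, the sequence $f_1,\dots,f_j,X_{j+1},\dots,X_n$ is regular, so modulo $\langle X_{j+1},\dots,X_n\rangle$ the polynomials $f_1,\dots,f_j$ induce a zero-dimensional regular sequence $\bar f_1,\dots,\bar f_j$ in $A_j = \K[X_1,\dots,X_j]$. The key feature of \grevlex-type orders for systems in Noether position is that $\LT(I_j)$ is \emph{cylindrical} over $X_{j+1},\dots,X_n$: its minimal generators involve only $X_1,\dots,X_j$, and $\LT(I_j) = \LT(\bar I_j)\,A$ with $\bar I_j = \langle \bar f_1,\dots,\bar f_j\rangle$. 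I would justify this by a Hilbert-series comparison: the cylinder $\LT(\bar I_j)\,A$ and the ideal $\LT(I_j)$ are monomial ideals with the same Hilbert function (both equal to that of $A/I_j$, by Theorem~\ref{thm:zerodim-values} applied to $\bar I_j$ together with the invariance of the Hilbert function under taking initial ideals), and one contains the other, so they coincide.

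Next I would pass to the monomial module $M := \LT(I_i)/\LT(I_{i-1})$. A genuinely new minimal generator $u$ of $\LT(I_i)$ satisfies $u/X_k \notin \LT(I_i)$ for every $k$, hence it does not lie in $\langle X_1,\dots,X_n\rangle M$ and is therefore a minimal generator of $M$; thus $c_{d,i}$ is bounded by the number of minimal generators of $M$ in degree $d$. Setting $\overline M := M/\langle X_i,\dots,X_n\rangle M$, one has $M/\langle X_1,\dots,X_n\rangle M = \overline M/\langle X_1,\dots,X_n\rangle \overline M$, so the number of minimal generators of $M$ equals that of $\overline M$, which is at most $\dim_\K \overline M_d$. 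Here the cylindrical structure pays off twice: it shows that $X_i,\dots,X_n$ is an $M$-regular sequence, since $\LT(I_{i-1})$ is cylindrical over $X_i,\dots,X_n$ and $\LT(I_i)$ over $X_{i+1},\dots,X_n$, so multiplying a nonzero monomial class of $M$ by any of these variables leaves its $X_1,\dots,X_{i-1}$-part unchanged and thus cannot push it into $\LT(I_{i-1})$. Regularity~\ref{item:H1} gives $\HS_{A/I_i} = (1-z^{d_i})\HS_{A/I_{i-1}}$, whence $\HS_M = \HS_{A/I_{i-1}} - \HS_{A/I_i} = z^{d_i}\HS_{A/I_{i-1}}$, and the $M$-regular sequence multiplies this by $\prod_{k=i}^{n}(1-z^{w_k})$, so that, using the regular-sequence series of Theorem~\ref{thm:zerodim-values},
\[
\HS_{\overline M}(z) = z^{d_i}\,\frac{\prod_{k=1}^{i-1}(1-z^{d_k})}{\prod_{k=1}^{n}(1-z^{w_k})}\prod_{k=i}^{n}(1-z^{w_k}) = z^{d_i}\prod_{k=1}^{i-1}\frac{1-z^{d_k}}{1-z^{w_k}} = B_i(z).
\]
Reading off the coefficient of $z^d$ gives $\dim_\K \overline M_d = b_{d,i}$, and chaining the inequalities yields $c_{d,i} \le b_{d,i}$.

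The main obstacle is the structural step of the second paragraph: establishing that the leading-term ideals are cylindrical over the Noether variables and, equivalently, that $X_i,\dots,X_n$ is a regular sequence on $M$. This is precisely where~\ref{item:H2} is used, and it is the only point that requires more than Hilbert-series bookkeeping; once the geometry of the staircase is in hand, the rest is formal. An alternative route, closer to Algorithm~\ref{algo:F5matqh}, would track the \F5 labels: each new generator carries a signature $(e,f_i)$ with $e$ standard modulo $I_{i-1}$, and one would argue that such signatures may be taken in $\K[X_1,\dots,X_{i-1}]$, recovering the same count $b_{d,i}$. However, making both the injectivity of the signature map and the restriction on $e$ precise appears to need exactly the same cylindrical structure, so I would favour the module-theoretic argument above.
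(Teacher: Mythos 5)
Your proof is correct, and it is genuinely more than what the paper itself offers: the paper's entire proof is a citation, asserting that the argument of \cite[th.~10]{Salvy2012} (a private communication) goes through once the weighted Hilbert series~\eqref{eq:2} replaces the homogeneous one. Your module-theoretic argument reconstructs that missing proof in a self-contained way — the reduction of the count to minimal generators of $M=\LT(I_i)/\LT(I_{i-1})$, the cylindrical structure of leading-term ideals under~\ref{item:H2} (which is exactly the content the paper later invokes separately as \cite[prop.~9]{Salvy2012}), and the Hilbert-series bookkeeping producing $B_i(z)$ — so the underlying route coincides with the one the paper points to, but your write-up is checkable without access to the unpublished reference. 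Two steps deserve one explicit line each. First, in the cylindricity argument, the inclusion $\LT(\bar I_j)\,A \subseteq \LT(I_j)$ is the containment that must actually be proved before the Hilbert-function comparison can force equality: it holds because any $W$-homogeneous $\bar g \in \bar I_j$ lifts to $g = \bar g + r \in I_j$ with $r \in \langle X_{j+1},\dots,X_n\rangle$, and for the \Wgrevlex order every monomial of $\bar g$ dominates every monomial of $r$ (within a fixed $W$-degree, monomials free of the last variables beat those divisible by one of them), so $\LT(g)=\LT(\bar g)$. Second, your divisibility argument literally shows only that each single variable $X_k$, $k\ge i$, is a nonzerodivisor on $M$, whereas the factorization $\HS_{\overline M}=\prod_{k=i}^{n}(1-z^{w_k})\HS_M$ requires the full regular-sequence property; this follows either by iterating the same divisibility argument on the successive quotients (which are again monomial modules of the same cylindrical shape), or more directly from the decomposition $M \cong \bigl(\LT(\bar I_i)/\LT(\bar I_{i-1})A_i\bigr) \otimes_\K \K[X_{i+1},\dots,X_n]$ that cylindricity provides, after which only the injectivity of $X_i$ on the first factor is needed. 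With those two additions your argument is complete and correct.
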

\begin{proof}
  The proof of \cite[th.~10]{Salvy2012} still holds in the quasi-homoge\-neous case, using formula~\eqref{eq:2} for the Hilbert series of a quasi-homogeneous regular sequence.
\end{proof}

\begin{filecontents}{bornes_complexites1.out}
1 4 0 6 27 4
2 64 2310 530 13500 64
3 520 20932 8154 326592 520
4 2440 104810 53424 2874960 2440
5 8256 381055 223250 15049125 8256
6 22618 1124602 707886 57305232 22618
7 53427 2860677 1862490 175959000 53427
8 113094 6506443 4283264 462540672 113094
9 219989 13558215 8898174 1081088775 219989
10 400088 26330609 17072250 2304760500 400088
11 688841 48254017 30727466 4563136512 688841
12 1133221 84236777 52477200 8501598000 1133221
13 1793990 141098422 85775274 15054153777 1793990
14 2748161 228080385 135079574 25531095240 2748161
15 4091664 357440538 206030250 41722857000 4091664
16 5942219 545137948 305642496 66021460992 5942219
17 8442397 811614222 442513910 101560921875 8442397
18 11762909 1182677823 627046434 152377991532 11762909
19 16106066 1690497742 871682874 223594620480 16106066
20 21709471 2374712888 1191158000 321623514000 21709471
21 28849892 3283663597 1602764226 454398160797 28849892
22 37847348 4475751619 2126631870 631628712000 37847348
23 49069398 6020934975 2786023994 865085088312 49069398
24 62935627 8002364053 3607645824 1168908693120 62935627
25 79922336 10518165333 4621968750 1559954109375 79922336
26 100567438 13683379106 5863568906 2058162158052 100567438
27 125475556 17632057579 7371480330 2686965696000 125475556
28 155323318 22519529736 9189562704 3473729530992 155323318
29 190864857 28524839332 11366883674 4450225831785 190864857
30 232937523 35853362418 13958115750 5653146411000 232937523
31 282467780 44739610740 17023947786 7124653258632 282467780
32 340477319 55450227430 20631511040 8912968704000 340477319
33 408089370 68287181340 24854819814 11073006583947 408089370
34 486535212 83591166409 29775226674 13667045795100 486535212
35 577160895 101745212440 35481892250 16765447608000 577160895
36 681434153 123178513671 42072269616 20447418120912 681434153
37 800951528 148370481500 49652603250 24801817231125 800951528
38 937445696 177855027773 58338442574 29928015501552 937445696
39 1092792987 212225084985 68255170074 35936800300440 1092792987
40 1269021121 252137369791 79538544000 42951332592000 1269021121
41 1468317132 298317396195 92335255646 51108155755767 1468317132
42 1693035510 351564744803 106803501210 60558257812500 1693035510
43 1945706529 412758594518 123113568234 71468188434432 1945706529
44 2229044791 482863523046 141448436624 84021232117680 2229044791
45 2545957967 562935582611 162004394250 98418638894625 2545957967
46 2899555736 654128657235 184991667126 114880913964072 2899555736
47 3293158938 757701107982 210635064170 133649167617000 3293158938
48 3730308918 875022712528 239174636544 154986526835712 3730308918
49 4214777080 1007581905450 270866351574 179179609944195 4214777080
50 4750574637 1156993325602 305982781250 206540065687500 4750574637
51 5341962572 1325005676963 344813805306 237406178117952 5341962572
52 5993461796 1513509909333 387667328880 272144538666000 5993461796
53 6709863502 1724547725258 434870014754 311151786773517 6709863502
54 7496239741 1960320419558 486768030174 354856420467360 7496239741
55 8357954177 2223198057843 543727808250 403720678251000 8357954177
56 9300673064 2515729000393 606136823936 458242493692032 9300673064
57 10330376411 2840649777777 674404384590 518957524083375 10330376411
58 11453369360 3200895324597 748962435114 586441254555972 11453369360
59 12676293760 3599609577728 830266377674 661311179020800 12676293760
60 14006139948 4040156445444 918795906000 744229059318000 14006139948
61 15450258726 4526131153793 1015055854266 835903263950937 15450258726
62 17016373550 5061371976612 1119577060550 937091187783000 17016373550
63 18712592913 5649972355560 1232917244874 1048601754074952 18712592913
64 20547422935 6296293416539 1355661901824 1171298000240640 20547422935
65 22529780151 7004976888888 1488425207750 1306099748698875 22529780151
66 24669004511 7780958433739 1631850942546 1453986364199292 24669004511
67 26974872572 8629481387885 1786613426010 1615999599000000 26974872572
68 29457610896 9556110929571 1953418468784 1793246527274832 29457610896
69 32127909654 10566748672568 2133004337874 1986902570128005 32127909654
70 34996936428 11667647694911 2326142736750 2198214612594000 34996936428
71 38076350218 12865428008685 2533639800026 2428504214000472 38076350218
72 41378315650 14167092477236 2756337102720 2679170913072000 41378315650
73 44915517388 15580043186176 2995112684094 2951695629152487 44915517388
74 48701174748 17112098274582 3250882086074 3247644160924020 48701174748
75 52749056513 18771509232746 3524599406250 3568670784000000 52749056513
76 57073495954 20566978672866 3817258365456 3916521948770352 57073495954
77 61689406046 22507678579055 4129893389930 4293040079876625 61689406046
78 66612294902 24603269043046 4463580708054 4700167478694792 66612294902
79 71858281386 26863917491971 4819439461674 5139950330203560 71858281386
80 77444110952 29300318414597 5198632832000 5614542815616000 77444110952
81 83387171673 31923713592391 5602369180086 6126211332152307 83387171673
82 89705510468 34745912841796 6031903201890 6677338821331500 89705510468
83 96417849550 37779315274103 6488537097914 7270429207159872 96417849550
84 103543603055 41036931079281 6973621757424 7908111945594000 103543603055
85 111102893888 44532403840166 7488557957250 8593146686656125 111102893888
86 119116570765 48280033383368 8034797575166 9328428050579712 119116570765
87 127606225459 52294799173288 8613844817850 10116990519363000 127606225459
88 136594210253 56592384255614 9227257463424 10962013445108352 136594210253
89 146103655581 61189199756686 9876648118574 11866826176525215 146103655581
90 156158487897 66102409945095 10563685490250 12834913304974500 156158487897
91 166783447715 71349957861908 11290095671946 13869920031432192 166783447715
92 178004107881 76950591525888 12057663444560 14975657655750000 178004107881
93 189846892026 82923890720093 12868233591834 16156109189590857 189846892026
94 202339093232 89290294366231 13723712230374 17415435094417080 202339093232
95 215508892901 96071128493149 14626068154250 18757979145909000 215508892901
96 229385379821 103288634805839 15577334194176 20188274426191872 229385379821
97 243998569436 110965999861333 16579608591270 21711049445248875 243998569436
98 259379423322 119127384857870 17635056385394 23331234392898012 259379423322
99 275559868865 127797956043713 18745910818074 25053967522710720 275559868865
100 292572819136 137003915752000 19914474750000 26884601669250000 292572819136
101 310452192971 146772534067988 21143122093106 28828710900005877 310452192971
102 329232935263 157132181135105 22434299257230 30892097303406000 329232935263
103 348951037439 168112360106146 23790526611354 33080797914279192 348951037439
104 369643558153 179743740746024 25214399959424 35401091778149760 369643558153
105 391348644178 192058193692444 26708592030750 37859507155740375 391348644178
106 414105551496 205088825380872 28275853984986 40462828869061332 414105551496
107 437954666600 218870013640190 29919016931690 43218105790464000 437954666600
108 462937527987 233437443965407 31640993464464 46132658476036272 462937527987
109 489096847862 248828146473804 33444779209674 49214086944717825 489096847862
110 516476534044 265080533550904 35333454389750 52470278604513000 516476534044
111 545121712066 282234438192631 37310185401066 55909416327179112 545121712066
112 575078747491 300331153050041 39378226406400 59539986672768000 575078747491
113 606395268416 319413470183011 41540920941974 63370788265398627 606395268416
114 639120188194 339525721529255 43801703539074 67410940321638540 639120188194
115 673303728342 360713820095052 46164101360250 71669891332872000 673303728342
116 708997441665 383025301874062 48631735850096 76157427903032592 708997441665
117 746254235577 406509368500611 51208324400610 80883683743078125 746254235577
118 785128395620 431216930643819 53897682031134 85859148823585632 785128395620
119 825675609197 457200652148956 56703723082874 91094678686844280 825675609197
120 867952989496 484514994932402 59630462928000 96601503919824000 867952989496
121 912019099624 513216264636586 62682019693326 102391239789397647 912019099624
122 957933976937 543362657051289 65862615998570 108475896041194500 957933976937
123 1005759157581 575014305307689 69176580709194 114867886863462912 1005759157581
124 1055557701230 608233327851517 72628350703824 121580041017319920 1055557701230
125 1107394216020 643083877201722 76222472656250 128625612134765625 1107394216020
126 1161334883705 679632189501001 79963604832006 136018289185840152 1161334883705
127 1217447484992 717946634864593 83856518899530 143772207116301000 1217447484992
128 1275801425098 758097768533699 87906101755904 151901957657198592 1275801425098
129 1336467759495 800158382839921 92117357367174 160422600307727835 1336467759495
130 1399519219868 844203559987088 96495408623250 169349673492733500 1399519219868
131 1465030240269 890310725656854 101045499207386 178699205896247232 1465030240269
132 1533076983478 938559703444444 105772995480240 188487727972434000 1533076983478
133 1603737367560 989032770130929 110683388378514 198732283635325797 1603737367560
134 1677091092634 1041814711798404 115782295328174 209450442128720400 1677091092634
135 1753219667836 1096992880794454 121075462172250 220660310077623000 1753219667836
136 1832206438489 1154657253552285 126568765113216 232380543722608512 1832206438489
137 1914136613471 1214900489272888 132268212669950 244630361338482375 1914136613471
138 1999097292795 1277817989475643 138179947649274 257429555838617652 1999097292795
139 2087177495381 1343507958423699 144310249132074 270798507566346240 2087177495381
140 2178468187034 1412071464430555 150665534474000 284758197274782000 2178468187034
141 2273062308628 1483612502054184 157252361320746 299330219296453617 2273062308628
142 2371054804488 1558238055185098 164077429637910 314536794904125000 2371054804488
143 2472542650977 1636058161034731 171147583755434 330400785864181032 2472542650977
144 2577624885284 1717185975030510 178469814426624 346945708183956480 2577624885284
145 2686402634416 1801737836624000 186051260901750 364195746054385875 2686402634416
146 2798979144392 1889833336018505 193899213016226 382175765989352172 2798979144392
147 2915459809638 1981595381822496 202021113293370 400911331163112000 2915459809638
148 3035952202588 2077150269635242 210424559061744 420428715947175312 3035952202588
149 3160566103480 2176627751571039 219117304587074 440754920648017245 3160566103480
150 3289413530370 2280161106728397 228107263218750 461917686447000000 3289413530370
151 3422608769325 2387887212610577 237402509550906 483945510543882552 3422608769325
152 3560268404842 2499946617503847 247011281598080 506867661505296000 3560268404842
153 3702511350457 2616483613819847 256941982985454 530714194819562367 3702511350457
154 3849458879552 2737646312408428 267203185153674 555515968659234660 3849458879552
155 4001234656383 2863586717847357 277803629578250 581304659852736000 4001234656383
156 4157964767289 2994460804715257 288752230003536 608112780066475632 4157964767289
157 4319777752118 3130428594854166 300058074691290 635973692198819625 4319777752118
158 4486804635849 3271654235628092 311730428683814 664921626987294072 4486804635849
159 4659178960418 3418306079183942 323778736081674 694991699830398600 4659178960418
160 4837036816751 3570556762721208 336212622336000 726219927825408000 4837036816751
161 5020516876985 3728583289776782 349041896555366 758643247023539787 5020516876985
162 5209760426917 3892567112531282 362276553827250 792299529903865500 5209760426917
163 5404911398624 4062694215143274 375926777554074 827227603067343552 5404911398624
164 5606116403315 4239155198117752 390002941803824 863467265152351440 5606116403315
165 5813524764364 4422145363715277 404515613675250 901059304973095125 5813524764364
166 6027288550556 4611864802408134 419475555677646 940045519882273392 6027288550556
167 6247562609536 4808518480389899 434893728125210 980468734359375000 6247562609536
168 6474504601455 5012316328144785 450781291545984 1022372818825986432 6474504601455
169 6708275032820 5223473330083157 467149609105374 1065802708689488055 6708275032820
170 6949037290550 5442209615249581 484010249044250 1110804423616516500 6949037290550
171 7196957676234 5668750549109799 501374987131626 1157425087037571072 7196957676234
172 7452205440586 5903326826423000 519255809131920 1205712945884142000 7452205440586
173 7714952818105 6146174565205764 537664913286794 1255717390559738337 7714952818105
174 7985375061947 6397535401794077 556614712811574 1307488975146193320 7985375061947
175 8263650478982 6657656587009760 576117838406250 1361079437846625000 8263650478982
176 8549960465067 6926791083437730 596187140781056 1416541721666429952 8549960465067
177 8844489540518 7205197663820444 616835693196630 1473929995333687875 8844489540518
178 9147425385780 7493141010575916 638076794018754 1533299674460354892 9147425385780
179 9458958877309 7790891816445684 659923969287674 1594707442945623360 9458958877309
180 9779284123645 8098726886279111 682390975302000 1658211274622826000 9779284123645
181 10108598501696 8416929239960380 705491801217186 1723870455151262157 10108598501696
182 10447102693224 8745788216484592 729240671658590 1791745604154324000 10447102693224
183 10795000721525 9085599579189317 753652049349114 1861898697605300472 10795000721525
184 11152499988325 9436665622147995 778740637751424 1934393090462236800 11152499988325
185 11519811310864 9799295277731555 804521383724750 2009293539553227375 11519811310864
186 11897148959198 10173804225344638 831009480196266 2086666226713519812 11897148959198
187 12284730693687 10560515001342802 858220368847050 2166578782175808000 12284730693687
188 12682777802699 10959757110137078 886169742812624 2249100308215091952 12682777802699
189 13091515140511 11371867136492275 914873549398074 2334301403049482265 13091515140511
190 13511171165408 11797188859025390 944347992807750 2422254184998327000 13511171165408
191 13941977977996 12236073364910522 974609536889546 2513032316899038792 13941977977996
192 14384171359706 12688879165796652 1005674907893760 2606711030784000000 14384171359706
193 14837990811508 13155972314944682 1037561097246534 2703367152818923707 14837990811508
194 15303679592821 13637726525590101 1070285364337874 2803079128504048380 15303679592821
195 15781484760634 14134523290537663 1103865239324250 2905927048139544000 15781484760634
196 16271657208824 14646752002994453 1138318525945776 3011992672556507472 16271657208824
197 16774451707670 15174810078647722 1173663304357970 3121359459114925125 16774451707670
198 17290126943589 15719103078993865 1209917933978094 3234112587969980112 17290126943589
199 17818945559051 16280044835924927 1247101056346074 3350338988608082520 17818945559051
200 18361174192716 16858057577579013 1285231598000000 3470127366654000000 18361174192716
\end{filecontents}

\begin{filecontents}{valeurs_complexite_1.out}
6 1170 66346 29468829791769786 4888 220200 57305232 
12 28535 13521183 258823931264938055758218 125783 12974196 8501598000 
18 199199 320252847 4032435168339004172907741937 893404 143777124 152377991532 
24 811622 3073296250 4177337302694930126991867847968 3652232 796789176 1168908693120 
30 2423326 17879443440 949881609354660304126863873672000 10945786 3014471280 5653146411000 
36 5959252 75638795601 81798168824902602950025317872512942 26931481 8951610060 20447418120912 
\end{filecontents}

\begin{filecontents}{valeurs_complexite_2.out}
6 1170 
12 28535 
18 199199 
24 811622 
30 2423319 
36 5959253 
42 12756330 
48 24724008 
\end{filecontents}

\begin{filecontents}{borne_bardet.out}
1 15
2 194
3 677
4 2550
5 6125
6 15298
7 30100
8 60629
9 105248
10 185598
11 295630
12 476247
13 712447
14 1074929
15 1532969
16 2200846
17 3022685
18 4173800
19 5560652
20 7441155
21 9668063
22 12608003
23 16040025
24 20470549
25 25580542
26 32052702
27 39440712
28 48645875
29 59060139
30 71852005
31 86211550
32 103629768
33 123048621
34 146344023
35 172157023
36 202818450
37 236608667
38 276391409
39 320019168
40 370975007
41 426608519
42 491117375
43 561264970
44 642068155
45 729612121
46 829847201
47 938079230
48 1061316485
49 1193974726
50 1344255218
51 1505562931
52 1687438185
53 1882144002
54 2100717280
55 2334137072
56 2595106261
57 2873166611
58 3182868713
59 3512151992
60 3877609222
61 4265400272
62 4694367754
63 5148702177
64 5649717258
65 6179431307
66 6761864463
67 7376646545
68 8050753901
69 8761197676
70 9538175128
71 10355834221
72 11247873165
73 12185317480
74 13205662144
75 14276535785
76 15439542169
77 16658622963
78 17979819381
79 19363080012
80 20859229243
81 22423899987
82 24113063029
83 25877696091
84 27779297523
85 29763832990
86 31898727933
87 34124561323
88 36515104014
89 39005155433
90 41675269401
91 44454054306
92 47429304152
93 50523005720
94 53830670505
95 57267213604
96 60936361839
97 64745488611
98 68807054855
99 73020401895
100 77507264961
101 82158442109
102 87105504867
103 92230175602
104 97674446395
105 103310409835
106 109291085498
107 115478360008
108 122036910489
109 128817818886
110 135998073481
111 143417329854
112 151265565035
113 159370363167
114 167935392028
115 176775495420
116 186108758716
117 195736592221
118 205892251026
119 216362994085
120 227398024042
121 238769705527
122 250743992712
123 263077587375
124 276054025763
125 289413552288
126 303458142827
127 317910763488
128 333092714772
129 348708836702
130 365100667257
131 381954045309
132 399631687481
133 417799528711
134 436842434158
135 456405503898
136 476896750690
137 497939480239
138 519965881560
139 542576477471
140 566228691930
141 590499246908
142 615871890453
143 641898495855
144 669090255295
145 696973115237
146 726086863362
147 755930410432
148 787073322751
149 818986335325
150 852270008395
151 886365729560
152 921906300934
153 958302559017
154 996220828784
155 1035040159484
156 1075461713426
157 1116831483553
158 1159886817901
159 1203939350720
160 1249763998514
161 1296636700697
162 1345371359757
163 1395206849932
164 1446997512433
165 1499943751348
166 1554941834994
167 1611152257283
168 1669514738092
169 1729148385644
170 1791037932341
171 1854259589275
172 1919844699283
173 1986825028531
174 2056280165574
175 2127195847066
176 2200701580370
177 2275735450830
178 2353478595937
179 2432819790276
180 2514993551461
181 2598837645781
182 2685641760068
183 2774190916271
184 2865831799067
185 2959294911067
186 3055985803388
187 3154578644930
188 3256539762241
189 3360485136326
190 3467943818983
191 3577471708896
192 3690662574196
193 3806010296140
194 3925175391970
195 4046587749310
196 4171976709410
197 4299706148514
198 4431576349337
199 4565883117029
200 4704499836215
\end{filecontents}

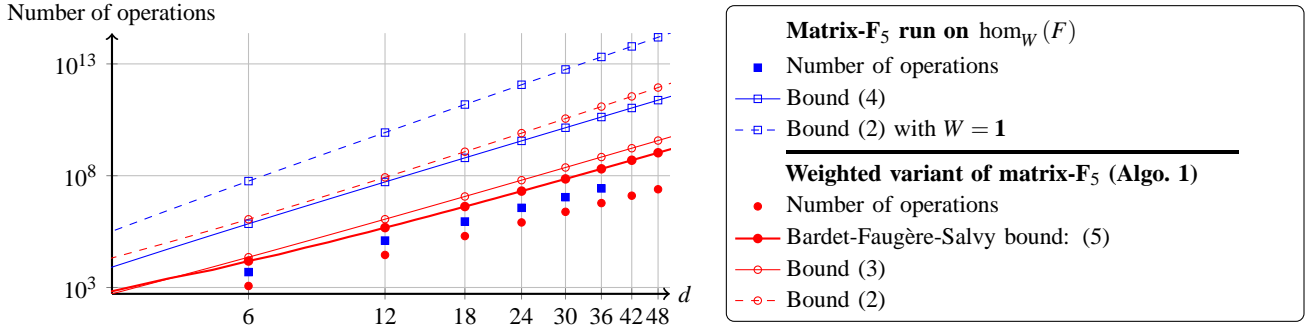
\begin{figure*}
  \centering
  \begin{tikzpicture}
    \begin{loglogaxis}
      [
      xlabel=$d$,
      xlabel style={
        at={(1,0)},
        anchor=west },
      xmin=3,
      xmax=51,
      ylabel={Number of operations},
      ylabel style={
        rotate=-90,
        at={(0,1)},
        anchor=south },
      xtick={0,6,...,48},
      xticklabel style={/pgf/number format/fixed},
      xticklabel={%
        \pgfmathfloatparsenumber{\tick}%
        \pgfmathfloatexp{\pgfmathresult}%
        \pgfmathprintnumber{\pgfmathresult}%
      },
      width=9cm,
      height=5.05cm,
      scaled ticks=false,
      mark options={ solid, scale=0.7},
      axis x line=bottom,
      axis y line=left,
      axis line style={
        ->,
        thick },
      legend style={
        at={(1.1,0.5)},
        anchor=west,
        nodes={text width=6.7cm,text depth={}},
        cells={anchor=west},
        rounded corners=3pt },
      grid=major,
      log basis y=10]
      
      \addlegendimage{empty legend}
      \addlegendentry{\textbf{Matrix-\F5 run on $\hom{W}(F)$}}
      
      \addplot[ only marks, mark=square*, blue]
      table [ x index=0, y index=4]
      {valeurs_complexite_1.out};
      \addlegendentry{Number of operations};
      
      \addplot[ mark=square, blue, solid, mark phase=4, mark repeat=6]
      table [ x index=0, y index=3] {bornes_complexites1.out};
      \addlegendentry{Bound \eqref{eq:89}};
      
      \addplot[ mark=square, blue, dashed, mark phase=4, mark repeat=6]
      table [x index=0, y index=4] {bornes_complexites1.out};
      \addlegendentry{Bound \eqref{eq:71} with $W=\mathbf{1}$};

      \addlegendimage{empty legend}
      \addlegendentry{\rule{6cm}{1pt}}
      \addlegendimage{empty legend}
      \addlegendentry{\textbf{Weighted variant of \hbox{matrix-\F5} (Algo.~\ref{algo:F5matqh})}}

      \addplot[ only marks, mark=*, red]
      table [x index=0, y index=1] {valeurs_complexite_2.out};
      \addlegendentry{Number of operations};
      
      \addplot[ mark=*, red, solid, thick, mark phase=4, mark repeat=6]
      table [ x index=0, y index=1] {borne_bardet.out};
      \addlegendentry{Bardet-Faugère-Salvy bound: \eqref{eq:4}};

      \addplot[ mark=o, red, solid, mark phase=4, mark repeat=6]
      table [x index=0, y index=1] {bornes_complexites1.out};
      \addlegendentry{Bound \eqref{eq:86}};
      
      \addplot[ dashed, mark=o, red, mark phase=4, mark repeat=6]
      table [x index=0, y index=2] {bornes_complexites1.out};
      \addlegendentry{Bound \eqref{eq:71}};

    \end{loglogaxis}
  \end{tikzpicture}

  \caption{Bounds and values, on a log-log scale, for the number of arithmetic operations performed in Matrix-\F5 for a generic system with $W=(1,2,3)$ and $\D=(d,d,d)$}
  \label{fig:boundslog}
\end{figure*}
So we can obtain a better bound for the number of elementary operations performed in a Matrix-\F5 run.
Indeed, $B_{i}(1)$ represents the number of reduced polynomials in the computation of a Gröbner basis of $(f_{1},\dots,f_{i},X_{i+1},\dots,X_{n})$, that is as many as in the computation of a Gröbner basis of $(f_{1},\dots,f_{i})$:
since we only perform reductions under the pivot line,
\cite[prop.~9]{Salvy2012} 
shows that the lines coming from $X_{i+1},\dots,X_{n}$ will not add any reduction.
Note that the above generating series is the same as the Hilbert series of $\langle f_{1}, \dots, f_{i-1}, X_{i}, \dots, X_{n} \rangle$, and so, that its value at $z=1$ is the degree of that ideal, or $D_{i-1}$.
Therefore, we know that the number of reduced polynomials with label $(m,f_{i})$ will be $D_{i-1}-D_{i-2}$ (with convention that $D_{0}=0$).

Now, let $g$ be any polynomial of $W$-degree $d$ being reduced in a run of the Matrix-\F5 algorithm on $(f_{1},\dots,f_{i})$.
From~\cite[prop.~9]{Salvy2012}, 
we know that the leading term of $g$, after reduction, is in $A_{i}$.
So overall, in $W$-degree $d$, we reduce by at most as many lines as there are monomials in $A_{i}$, that is $M_{d,W}(i)$.
Furthermore, each reduction costs at most $O(M_{d,W}(n))$ elementary algebraic operations, since this is the length of the matrix lines.
And we perform these reductions up to degree $\dreg^{{(i)}}$.
Note that, if $i=1$, there clearly isn't any reduction in the computation, and we obtain the following formulas:
\begin{align}
  \label{eq:86}
  \hspace{-0.2cm} C_{\text{\F5}}
  & = O
  \left(
   \sum_{i=2}^{m} (D_{i-1}-D_{i-2}) M_{\dreg^{(i)},W}(i) M_{\dreg^{(i)},W}(n)\hspace{-1.3mm}
  \right)\\
  \label{eq:88}
  & \begin{aligned}
    = O \left(\vphantom{\sum_{i=2}^{m}} \right.& \sum_{i=2}^{m} \frac{1}{P_{i}P_{n}}\left(\frac{\tilde{D_{i-1}}}{P_{i-1}}-\frac{\tilde{D_{i-2}}}{P_{i-2}}\right)
    \\[-0.2cm]
    &\cdot M_{\dreg^{(i)} + S_{i} -i+1,\mathbf{1}}(i)  
    \left. \hspace{-0.3em}
      \vphantom{\sum_{i=1}^{m}}\cdot M_{\dreg^{(i)} + S_{n} -n +1,\mathbf{1}}(n) \right) 
  \end{aligned}
\end{align}

In comparison, the above reasoning for Matrix-\F5 applied to $\tilde{F}$ would give
\begin{equation}
  \label{eq:89}
  C_{\text{\F5}} = O
  \left(
   \sum_{i=2}^{m} \left(\tilde{D_{i-1}}-\tilde{D_{i-2}}\right) M_{\tilde{\dreg^{(i)}},\mathbf{1}}(i) M_{\tilde{\dreg^{(i)}},\mathbf{1}}(n)
  \right)
\end{equation}
so that here again, working with quasi-homogeneous polynomials yields a gain or roughly $P^{3}$.
Note that the exponent~$3$ (instead of the previous $\omega$) is not really meaningful, because we assumed here that we were using the naive pivot algorithm to perform the Gauss reduction.
However, if we assume $\omega = 3$ in the previous computations as well, we observe that our new bound is generally much better than the previous one: figure~\ref{fig:boundslog} shows a plot of data obtained both with algorithm~\ref{algo:F5matqh} and with Matrix-\F5 through $\hom{W}$, together with the different bounds we can compute.

Asymptotically, though, the gain does not look important, since the complexity is still $O(nD^{3})$ where $D$ is the degree of the ideal and $n \geq m$ the number of variables, or in $O(nd^{3n})$ where $d$ is the greatest $d_{i}$.
\vspace{-0.2cm}
\begin{remark}
  One may also push the computations a bit further, and obtain an even more accurate bound, expressed in terms of the $b_{d,i}$ (these calculations are done in \cite{Bar04} for the homogeneous case, and can easily be transposed to the quasi-homogeneous case):
    \begin{multline}
    \label{eq:4} 
      \hspace{-0.5cm}C_{\text{\F5}} = O \left(
       \sum_{i=1}^{m-1} \sum_{d=0}^{\infty}\right. \frac{b_{d+d_{i+1},i+1}}{P_{i+1}P_{n}} \cdot M_{d + d_{i+1} + S_{i+1} - i ,\mathbf{1}}(i+1)\\[-0.2cm]
      \left.
       \vphantom{\sum_{i=1}^{m-1}} \cdot M_{d + d_{i+1} + S_{n} -n +1,\mathbf{1}}(n) \right).
  \end{multline}
  As an example, we computed that bound as well for a particular case, and included it in figure~\ref{fig:boundslog}.
  As one can see, that bound is indeed better than the intermediate evaluation~\eqref{eq:86}, but the difference is low enough to justify using the latter evaluation.
  Furthermore, the bound~\eqref{eq:86} expressed in terms of the $D_{i}$'s is more useful in practice, since it has a closed formula using only the parameters of the system ($n$, $m$, $d_{i}$ and $w_{i}$).
  That allows us to use it in complexity evaluations, in both theory and practice.
\end{remark}
\vspace{-0.1cm}
\vspace{-0.3cm}
\begin{remark}
  As one can see on figure~\ref{fig:boundslog}, the number of operations needed by Matrix-\F5 on the homogenized system is not significantly higher than the number of operations needed by the quasi-homogeneous variant of Matrix-\F5.
  That is mostly true because the unmodified algorithm can make use of some of the structure of the quasi-homogeneous systems (for example, columns of zeroes in the matrices).
\end{remark}

\begin{table*}[t]
  \centering
{\small
  \subfloat[Benchmarks with FGb]{%
  \begin{tabular}{lrllm{1.3cm}llm{1.3cm}}
    \hline
    System & $\deg(I)$ & $t_{\F5}$ (qh) & $t_{\F5}$ (std) & Speed-up for \F5 & $t_{\mathrm{FGLM}}$ (qh) & $t_{\mathrm{FGLM}}$ (std) & Speed-up for FGLM \\
\hline
Generic $n=7$, $W=(1^{4},2^{3})$, $\mathbf{D}=(4^{7})$ & \num{2048} & \SI{2.7}{\second} & \SI{3.4}{\second} & \num{1.2} & \SI{0.4}{\second} & \SI{1.1}{\second} & \num{2.6} \\
Generic $n=8$, $W=(1^{4},2^{4})$, $\mathbf{D}=(4^{8})$ & \num{4096} & \SI{12.3}{\second} & \SI{22.5}{\second} & \num{1.8} & \SI{2.4}{\second} & \SI{7.3}{\second} & \num{3.0} \\
Generic $n=9$, $W=(1^{5},2^{4})$, $\mathbf{D}=(4^{9})$ & \num{16384} & \SI{314.9}{\second} & \SI{778.5}{\second} & \num{2.5} & \SI{119.6}{\second} & \SI{327.8}{\second} & \num{2.7} \\
\hline
Generic $n=7$, $W=(2^{5},1^2)$, $\mathbf{D}=(4^{7})$ & \num{512} & \SI{0.1}{\second} & \SI{0.3}{\second} & \num{3.2} & \SI{0.1}{\second} & \SI{0.1}{\second} & \num{1.7} \\
Generic $n=8$, $W=(2^{6},1^2)$, $\mathbf{D}=(4^{8})$ & \num{1024} & \SI{0.4}{\second} & \SI{1.6}{\second} & \num{4.2} & \SI{0.2}{\second} & \SI{0.3}{\second} & \num{1.9} \\
Generic $n=9$, $W=(2^{7},1^2)$, $\mathbf{D}=(4^{9})$ & \num{2048} & \SI{1.6}{\second} & \SI{8}{\second} & \num{4.9} & \SI{0.6}{\second} & \SI{1.2}{\second} & \num{2.0} \\
Generic $n=10$, $W=(2^{8},1^2)$, $\mathbf{D}=(4^{10})$ & \num{4096} & \SI{7.5}{\second} & \SI{40.4}{\second} & \num{5.4} & \SI{2.4}{\second} & \SI{6.2}{\second} & \num{2.6} \\
Generic $n=11$, $W=(2^{9},1^2)$, $\mathbf{D}=(4^{11})$ & \num{8192} & \SI{33.3}{\second} & \SI{213.5}{\second} & \num{6.4} & \SI{17.5}{\second} & \SI{41.2}{\second} & \num{2.4} \\
Generic $n=12$, $W=(2^{10},1^2)$, $\mathbf{D}=(4^{12})$ & \num{16384} & \SI{167.9}{\second} & \SI{1135.6}{\second} & \num{6.8} & \SI{115.8}{\second} & \SI{246.7}{\second} & \num{2.1} \\
Generic $n=13$, $W=(2^{11},1^2)$, $\mathbf{D}=(4^{13})$ & \num{32768} & \SI{796.7}{\second} & \SI{6700}{\second} & \num{8.4} & \SI{782.7}{\second} & \SI{1645.1}{\second} & \num{2.1} \\
Generic $n=14$, $W=(2^{12},1^2)$, $\mathbf{D}=(4^{14})$ & \num{65536} & \SI{5040.1}{\second} & $\infty$ & $\infty$ & \SI{5602.3}{\second} & NA & NA \\
\hline
DLP Edwards $n=4$, $W=(2^{3},1)$, $\mathbf{D}=(8^4)$ & \num{512} & \SI{0.1}{\second} & \SI{0.1}{\second} & \num{1} & \SI{0.1}{\second} & \SI{0.1}{\second} & \num{1} \\
DLP Edwards $n=5$, $W=(2^{4},1)$, $\mathbf{D}=(16^5)$ & \num{65536} & \SI{935.4}{\second} & \SI{6461.2}{\second} & \num{6.9} & \SI{2164.4}{\second} & \SI{6935.6}{\second} & \num{3.2} \\
\hline
  \end{tabular}
  \label{tab:benchmarkFGb}
}\\
\subfloat[Benchmarks with Magma]{
  \begin{tabular}{lrllm{1.3cm}llm{1.3cm}}
    \hline
    System & $\deg(I)$ & $t_{\F4}$ (qh) & $t_{\F4}$ (std) & Speed-up for \F4 & $t_{\mathrm{FGLM}}$ (qh) & $t_{\mathrm{FGLM}}$ (std) & Speed-up for FGLM \\
\hline
Generic $n=7$, $W=(1^{4},2^{3})$, $\mathbf{D}=(4^{7})$ & \num{2048} & \SI{7.9}{\second} & \SI{14}{\second} & \num{1.7} & \SI{214.2}{\second} & \SI{222.7}{\second} & \num{1} \\
Generic $n=8$, $W=(1^{4},2^{4})$, $\mathbf{D}=(4^{8})$ & \num{4096} & \SI{62.6}{\second} & \SI{138.3}{\second} & \num{2.2} & \SI{1774.7}{\second} & \SI{1797.1}{\second} & \num{1} \\
Generic $n=9$, $W=(1^{5},2^{4})$, $\mathbf{D}=(4^{9})$ & \num{16384} & \SI{3775.5}{\second} & \SI{8830.5}{\second} & \num{2.3} & $\infty$ & $\infty$ & NA \\
\hline
Generic $n=7$, $W=(2^{5},1^2)$, $\mathbf{D}=(4^{7})$ & \num{512} & \SI{0.2}{\second} & \SI{0.7}{\second} & \num{3.5} & \SI{45.5}{\second} & \SI{45.6}{\second} & \num{1} \\
Generic $n=8$, $W=(2^{6},1^2)$, $\mathbf{D}=(4^{8})$ & \num{1024} & \SI{1}{\second} & \SI{6.2}{\second} & \num{6.2} & \SI{512.3}{\second} & \SI{515.6}{\second} & \num{1} \\
Generic $n=9$, $W=(2^{7},1^2)$, $\mathbf{D}=(4^{9})$ & \num{2048} & \SI{6}{\second} & \SI{88.1}{\second} & \num{14.7} & \SI{7965}{\second} & \SI{8069.4}{\second} & \num{1} \\
Generic $n=10$, $W=(2^{8},1^2)$, $\mathbf{D}=(4^{10})$ & \num{4096} & \SI{42.4}{\second} & \SI{911.8}{\second} & \num{21.5} & $\infty$ & $\infty$ & NA \\
Generic $n=11$, $W=(2^{9},1^2)$, $\mathbf{D}=(4^{11})$ & \num{8192} & \SI{292.5}{\second} & \SI{12126.4}{\second} & \num{41.5} & $\infty$ & $\infty$ & NA \\
Generic $n=12$, $W=(2^{10},1^2)$, $\mathbf{D}=(4^{12})$ & \num{16384} & \SI{2463.2}{\second} & \SI{146774.7}{\second} & \num{59.6} & $\infty$ & $\infty$ & NA \\
Generic $n=13$, $W=(2^{11},1^2)$, $\mathbf{D}=(4^{13})$ & \num{32768} & $\infty$ & $\infty$ & NA & $\infty$ & $\infty$ & NA \\
\hline
DLP Edwards $n=4$, $W=(2^{3},1)$, $\mathbf{D}=(8^4)$ & \num{512} & \SI{1}{\second} & \SI{1}{\second} & \num{1} & \SI{1}{\second} & \SI{27}{\second} & \num{27} \\
DLP Edwards $n=5$, $W=(2^{4},1)$, $\mathbf{D}=(16^5)$ & \num{65536} & \SI{6044}{\second} & \SI{56105}{\second} & \num{9.3} & $\infty$ & $\infty$ & NA \\
\hline
  \end{tabular}
  \label{tab:benchmarkMagma}
}

  \caption{Benchmarks with FGb and Magma for some affine systems}
  \label{tab:benchs}
}
\end{table*}

\vspace{-0.3cm}
\section{The affine case}
\label{sec:cas-affine}

We will now consider the case of input that do not necessarily consist of quasi-homogeneous polynomials.
One of the methods to find a \grevlex Gröbner basis of such a system is to apply \F5,
considering at $W$-degree $d$ the set of monomials having $W$-degree
lower than or equal to $d$.
This is equivalent to homogenizing the system, i.e. to adding a variable $X_{1} > \dots > X_{n} > H$, and applying the classical \F5 algorithm to this homogeneous system.
The reverse transformation is done by evaluating each polynomial at $H=1$.

However, this process makes it harder to compute the complexity of the \F5 algorithm.
The main reason is that dehomogenizing does not necessarily preserve $W$-degree, and as a consequence, it is no longer
true that running the Matrix-\F5 algorithm up to $W$-degree $d$ provides us with a basis, truncated at $W$-degree $d$.
What remains true though is that past some $W$-degree, we may obtain a Gröbner basis for the entire ideal.

Generally, we want to avoid \emph{degree falls} in the run of \F5, that is, reductions where the $W$-degree of the reductee is less than the $W$-degrees of the polynomials forming the $S$-pair.
This phenomenon is similar to reductions to zero in the quasi-homogeneous case.
It can be ruled out by considering only systems which are \emph{regular in the affine sense} (as found in \cite{Bar04} for gradings in total degree).
\begin{definition}
  Let $W$ be a system of weights, and $(f_{1},\dots,f_{n})$ be a system of not-necessarily $W$-homogeneous polynomials.
  We denote by $h_{i}$ the quasi-homogeneous component of highest $W$-degree in $f_{i}$, for any $1 \leq i \leq n$.
  We say that the sequence $(f_{i})$ is \emph{regular in the affine sense} when the sequence $(h_{i})$ is regular (in the quasi-homogeneous sense).
  We define the \emph{degree of regularity} of the ideal $\langle f_{i} \rangle$ as the degree of regularity of the ideal $\langle h_{i} \rangle$.
\end{definition}\myvspace{-0.15cm}

Since a degree fall in a run of \F5 is precisely a reduction to zero in the highest $W$-degree quasi-homogeneous components of the
system, we know that the \F5 criterion rules out all degree falls in a run of \F5 on such a regular system.
In turns, it ensures that for such a system, running Matrix-\F5 up to degree $d$ returns a $d$-Gröbner basis of $F$.

Hence we can study the complexity of
\F5 by looking at a run of Matrix-\F5 on the homogenized system.
As an example, we prove the following theorem:\myvspace{-0.15cm}
\begin{theorem}
  Let $W=(w_{1},\dots,w_{n})$ be a system of weights, and let $f_{1},\dots,f_{m}$ be a generic system of polynomials of the form $f_{i} = g_{i} + \lambda_{i}$, with $g_{i}$ $W$-homogeneous of $W$-degré $d_{i}$ and $\lambda_{i} \in \K$.
  Let $D$ be the degree of the system, $\dreg$ its degree of regularity, and $\delta$ the gcd of the $d_{i}$'s.
  We can compute a \Wgrevlex Gröbner basis of this system in time
  \begin{equation}
    \label{eq:40}
    O\left( \frac{\dreg}{\delta^{\omega}}M_{d,W}(n)^{\omega}\right),
  \end{equation}
  or in other words, we can divide the known complexity of the \F5 process on such a system by $\delta^{\omega}$.
\end{theorem}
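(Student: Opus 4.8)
The plan is to reduce the statement to the already-established complexity of the quasi-homogeneous Matrix-\F5 algorithm, and to extract the extra factor $\delta^{\omega}$ from a coarser grading. First I would set up the computation exactly as in Section~\ref{sec:cas-affine}: homogenize with a fresh variable $H$ (of $W$-weight $1$, ordered as $X_{1} > \dots > X_{n} > H$), turning $f_{i} = g_{i} + \lambda_{i}$ into the $W$-homogeneous polynomial $g_{i} + \lambda_{i}H^{d_{i}}$ of $W$-degree $d_{i}$. Since the top components $g_{1},\dots,g_{m}$ form a regular sequence, which is exactly the hypothesis that $(f_{i})$ is regular in the affine sense, the \F5 criterion rules out all degree falls, so running Algorithm~\ref{algo:F5matqh} up to $W$-degree $\dreg$ returns a $\dreg$-Gröbner basis, which dehomogenizes (by setting $H=1$) to a \Wgrevlex Gröbner basis of $\langle f_{1},\dots,f_{m}\rangle$. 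Through the estimates of Section~\ref{sec:first-compl-bounds} this already bounds the cost by $O(\dreg\, M_{\dreg,W}(n)^{\omega})$; the whole point is to save the factor $\delta^{\omega}$.

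The key structural observation is that every monomial occurring in each $f_{i}$ has $W$-degree divisible by $\delta = \gcd(d_{i})$: the component $g_{i}$ is homogeneous of $W$-degree $d_{i} \equiv 0 \pmod{\delta}$, and the scalar $\lambda_{i}$ has $W$-degree $0$. Consequently the $f_{i}$ are homogeneous for the coarser $\Z/\delta\Z$-grading obtained by reducing the $W$-degree modulo $\delta$, and so is the ideal they generate. I would then check that this coarse homogeneity is preserved throughout Matrix-\F5: every row $\mathbf{X}^{\gamma}f_{i}$ produced by the algorithm is homogeneous of class $\deg_{W}(\mathbf{X}^{\gamma}) \bmod \delta$, and because \textsc{EchelonForm} performs no row swaps and reduces only under the pivot, two rows combined during elimination share the same leading column, hence the same residue class. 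Thus no class is ever mixed, and each matrix splits into $\delta$ independent diagonal blocks indexed by the residue class, the block of class $r$ having columns indexed only by the monomials of $W$-degree congruent to $r$ modulo $\delta$. By Lemma~\ref{lemme:nombre_monomes_W} the number of such monomials is smaller than $M_{d,W}(n)$ by a factor of roughly $\delta$, so each block is about $\delta$ times narrower than the matrix one would otherwise reduce.

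With the matrices block-diagonalized, I would re-run the counting of Section~\ref{sec:first-compl-bounds}, replacing each width $M_{d,W}(n)$ by its class-restricted version; since reducing a matrix narrower by a factor $\delta$ costs $\delta^{\omega}$ less, the bound $O(\dreg\, M_{\dreg,W}(n)^{\omega})$ improves to $O\!\left(\dreg\, M_{\dreg,W}(n)^{\omega}/\delta^{\omega}\right)$, which is \eqref{eq:40}. The main obstacle is the precise bookkeeping of this last step: one must argue that the work is genuinely dominated by a single residue class rather than paying $\delta$ times over for the $\delta$ blocks, which would only yield a factor $\delta^{\omega-1}$. This requires showing that the reductions of the affine system producing the \Wgrevlex Gröbner basis concentrate in the residue class of the generators, and that the affine columns of $W$-degree strictly below the current degree are counted correctly. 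I expect this residue-class accounting, together with checking that the degree of regularity governing the termination degree is the one inherited from $\langle g_{i}\rangle$, to be the delicate part; the remainder is a transcription of the homogeneous analysis already carried out in the paper.
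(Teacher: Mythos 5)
Your structural observation is sound, and it is indeed the reason the theorem is true: since every $d_{i}$ is divisible by $\delta$ and the $\lambda_{i}$ have $W$-degree $0$, the ideal is homogeneous for the coarse $\Z/\delta\Z$-grading, and the elimination never mixes residue classes. But your proof has a genuine gap, and it sits exactly where you flagged it. Having homogenized with $H$ of weight $1$, the matrix in degree $d$ really does contain rows in \emph{all} $\delta$ residue classes: the row $\mathbf{X}^{\gamma}H^{k}\left(g_{i}+\lambda_{i}H^{d_{i}}\right)$ has class $(d-k) \bmod \delta$, so every class is populated as the $H$-exponent $k$ varies. Naive per-block accounting then gives $\delta\cdot\left(M/\delta\right)^{\omega} = M^{\omega}/\delta^{\omega-1}$, i.e.\ only a gain of $\delta^{\omega-1}$, and your proposal defers rather than supplies the argument that recovers the missing factor of $\delta$. (The rescue does exist: for $r \neq d \bmod \delta$, every column and every row of the class-$r$ block in degree $d$ is divisible by $H$, so that block is exactly $H$ times the class-$r$ block of degree $d-1$; since multiplication by $H$ preserves the relative order of monomials, it arrives already in echelon form from the previous step, and only the single class $d \bmod \delta$ requires genuine elimination. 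Note also that this active class is $d \bmod \delta$, which changes with $d$ --- it is not ``the residue class of the generators''.)

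The paper dissolves the whole difficulty with a one-line device you did not consider: choose the weight of $H$ to be $\delta$ rather than $1$, so that the homogenized polynomials are $f_{i}^{h} = g_{i}+\lambda_{i}H^{d_{i}/\delta}$, quasi-homogeneous with respect to the system of weights $(w_{1},\dots,w_{n},\delta)$. Then the degree-$d$ matrix contains, by construction, only monomials $\mathbf{X}^{\alpha}H^{j}$ with $\deg_{W}(\mathbf{X}^{\alpha}) \equiv d \pmod{\delta}$ --- a single residue class --- so no block decomposition or re-accounting is needed. One then only checks that the homogenized system is still regular (a reduction to zero would be a degree fall, excluded by regularity in the affine sense) and still in Noether position for the first $m$ variables, after which the bounds of sections~\ref{sec:degr-regul-degr} and~\ref{sec:first-compl-bounds} apply verbatim, with the weight product in Lemma~\ref{lemme:nombre_monomes_W} becoming $P\delta$; this is precisely the factor $\delta^{\omega}$ claimed in \eqref{eq:40}. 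Your route can be completed along the lines sketched above, but as written the decisive step is missing, and the paper's choice of weight for $H$ is the cleaner way to make it disappear.
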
\myvspace{-0.35cm}
\begin{proof}
  The idea is that when we homogenize the system, we can choose any
  suitable weight for $H$, not necessarily 1.  More precisely, we can
  set the weight of $H$ to be $\delta$, so that the homogenized
  polynomials become $ f_{i}^{h} = g_{i} +
  \lambda_{i}H^{d_{i}/\delta}.$
  
  Thus, assuming the computations made at section~\ref{sec:degr-regul-degr} still hold, we have the same improvements on the bound on $\dreg$ and on  the size of matrices as before, and thus we have the wanted result.

  Note that even if the initial system is generic, the homogenized system is not.
  However, one can check that if the initial system was regular in the affine sense, the homogenized system is still regular.
  Indeed, it's enough to check that no reduction to zero occur in a Matrix-\F5 run, but it is clear, since such a reduction would in particular be a degree fall.
  Also, the property of being in Noether position for the $m$ first variables is clearly kept upon homogenizing.

  As such, generically, our homogenized system is regular and in Noether position, so the previous computations indeed still hold.
\end{proof}

\section{Experimental results}
\label{sec:experimental-results}

We have run some benchmarks\footnote{\small All the systems we used are available online on \url{http://www-polsys.lip6.fr/\~jcf/Software/benchsqhomog.html}.}, using the FGb library and the Magma algebra software.
We present these results in Tables~\ref{tab:benchmarkFGb} and~\ref{tab:benchmarkMagma}.
The examples are chosen with increasing $n$ (number of variables and polynomials), two different classes of systems of weights $W$ and systems of $W$-degrees $D$.
With these conditions, we built a generic system of polynomials $f_{i}$ in $\FF_{65521}[\mathbf{X}]$, such that all monomials appearing in $f_{i}$ have $W$-degree at most $d_{i}$.
The last examples are systems arising in the study of the Discrete Logarithm Problem, when trying to compute the decompositions of points on an elliptic curve (see \cite{FGHR12}).
In both cases, we use a shortened notation for the systems of weights and the degrees, where for example $(2^{3},1^{2})$ means $(2,2,2,1,1)$.
The magma benchmarks were run on a machine with \SI{128}{\giga\byte}~RAM and \SI{3}{\giga\hertz}~CPU, running \hbox{Magma~v.2.17-1}.
The FGb benchmarks were run on a laptop with \SI{16}{\giga\byte}~RAM and \SI{3}{\giga\hertz}~CPU.

For each system, we compared our strategy (``qh'') with the default strategy (``std''), for both steps.
The algorithms used by the FGb library are \F5 and an implementation of FGLM taking advantage of the sparsity of the matrices (\cite{SparseFGLM}).
The algorithms used by Magma are \F4 and the classical FGLM.
The complexity of sparse-FGLM depends on the number of solutions of the system and on the shape of the input basis, while the complexity of classical FGLM depends only on the number of solutions.
This explains why we can see a speed-up on the FGLM step in FGb, even though the degree is unchanged.

\textbf{Acknowledgments.} This work was supported in part by the HPAC grant (ANR ANR-11-BS02-013) and by the EXACTA grant (ANR-09-BLAN-0371-01) of the French National Research Agency.

\bibliographystyle{abbrv}

\begin{thebibliography}{}

\end{thebibliography}


\begin{thebibliography}{10}

  \bibitem{Geir02}
  G.~Agnarsson.
  \newblock On the {S}ylvester denumerants for general restricted partitions.
  \newblock In {\em Proceedings of the {T}hirty-third {S}outheastern
    {I}nternational {C}onference on {C}ombinatorics, {G}raph {T}heory and
    {C}omputing ({B}oca {R}aton, {FL}, 2002)}, volume 154, pages 49--60, 2002.

  \bibitem{Bar04}
  M.~Bardet.
  \newblock {\em {{\'E}tude des syst{\`e}mes alg{\'e}briques
      surd{\'e}termin{\'e}s. Applications aux codes correcteurs et {\`a} la
      cryptographie}}.
  \newblock Thesis, Universit{\'e} Pierre et Marie Curie - Paris VI, Dec. 2004.

  \bibitem{Salvy2012}
  M.~Bardet, J.-C. Faugère, and B.~Salvy.
  \newblock {On the complexity of the $F_5$ Gröbner basis algorithm}.
  \newblock Private communication, 2012.

  \bibitem{becker1993grobner}
  T.~Becker and V.~Weispfenning.
  \newblock {\em Gr\"obner bases}, volume 141 of {\em Graduate Texts in
    Mathematics}.
  \newblock Springer-Verlag, New York, 1993.
  \newblock A computational approach to commutative algebra, In cooperation with
  Heinz Kredel.

  \bibitem{Magma}
  W.~Bosma, J.~Cannon, and C.~Playoust.
  \newblock The {M}agma algebra system. {I}. {T}he user language.
  \newblock {\em J. Symbolic Comput.}, 24(3-4):235--265, 1997.
  \newblock Computational algebra and number theory (London, 1993).

  \bibitem{Buch76}
  B.~Buchberger.
  \newblock A theoretical basis for the reduction of polynomials to canonical
  forms.
  \newblock {\em ACM SIGSAM Bull.}, 10(3):19--29, 1976.

  \bibitem{MR2035232}
  A.~Dickenstein and I.~Z. Emiris.
  \newblock Multihomogeneous resultant matrices.
  \newblock In {\em {Proceedings of the 2002 International Symposium on Symbolic
      and Algebraic Computation}}, pages 46--54, New York, 2002. ACM.

  \bibitem{eisenbud95}
  D.~Eisenbud.
  \newblock {\em Commutative algebra}, volume 150 of {\em Graduate Texts in
    Mathematics}.
  \newblock Springer-Verlag, New York, 1995.
  \newblock With a view toward algebraic geometry.

  \bibitem{F99a}
  J.-C. Faug{\'e}re.
  \newblock A new efficient algorithm for computing {G}r\"obner bases {$(F_4)$}.
  \newblock {\em J. Pure Appl. Algebra}, 139(1-3):61--88, 1999.
  \newblock Effective methods in algebraic geometry (Saint-Malo, 1998).

  \bibitem{Fau02a}
  J.-C. Faug{\`e}re.
  \newblock A new efficient algorithm for computing {G}r\"obner bases without
  reduction to zero {$(F_5)$}.
  \newblock In {\em Proceedings of the 2002 {I}nternational {S}ymposium on
    {S}ymbolic and {A}lgebraic {C}omputation}, pages 75--83 (electronic)
  , New York, 2002. ACM.

  \bibitem{FGLM}
  J.~C. Faug{\`e}re, P.~Gianni, D.~Lazard, and T.~Mora.
  \newblock Efficient computation of zero-dimensional {G}r\"obner bases by change
  of ordering.
  \newblock {\em J. Symbolic Comput.}, 16(4):329--344, 1993.

  \bibitem{FJ03}
  J.-C. Faug{\`e}re and A.~Joux.
  \newblock Algebraic cryptanalysis of hidden field equation ({HFE})
  cryptosystems using {G}r\"obner bases.
  \newblock In {\em Advances in cryptology---{CRYPTO} 2003}, volume 2729 of {\em
    Lecture Notes in Comput. Sci.}, pages 44--60. Springer, Berlin, 2003.
  
  \bibitem{SparseFGLM}
  J.-C. Faug{\`e}re and C.~Mou.
  \newblock {Sparse FGLM algorithms}.
  \newblock Preprint available at \url{http://hal.inria.fr/hal-00807540}.


 
  \bibitem{FR09}
  J.-C. Faug{\`e}re and S.~Rahmany.
  \newblock Solving systems of polynomial equations with symmetries using
  {SAGBI}-{G}r\"obner bases.
  \newblock In {\em I{SSAC} 2009---{P}roceedings of the 2009 {I}nternational
    {S}ymposium on {S}ymbolic and {A}lgebraic {C}omputation}, pages 151--158.
  ACM, New York, 2009.

  \bibitem{FSS10b}
  J.-C. Faug{\`e}re, M.~Safey El~Din, and P.-J. Spaenlehauer.
  \newblock Gr\"obner bases of bihomogeneous ideals generated by polynomials of
  bidegree {$(1,1)$}: algorithms and complexity.
  \newblock {\em J. Symbolic Comput.}, 46(4):406--437, 2011.

  \bibitem{F10c}
  J.-C. Faugère.
  \newblock {FGb: A Library for Computing Gröbner Bases}.
  \newblock In K.~Fukuda, J.~Hoeven, M.~Joswig, and N.~Takayama, editors, {\em
    {Mathematical Software - ICMS 2010}}, volume 6327 of {\em Lecture Notes in
    Computer Science}, pages 84--87, Berlin, Heidelberg, September 2010. Springer
  Berlin / Heidelberg.

  \bibitem{FGHR12}
  J.-C. Faugère, P.~Gaudry, L.~Huot, and G.~Renault.
  \newblock Using symmetries in the index calculus for elliptic curves discrete
  logarithm.
  \newblock Cryptology ePrint Archive, Report 2012/199, 2012.

  \bibitem{hartshorne77}
  R.~Hartshorne.
  \newblock {\em Algebraic geometry}.
  \newblock Springer-Verlag, New York, 1977.
  \newblock Graduate Texts in Mathematics, No. 52.

  \bibitem{MilneAG}
  J.~S. Milne.
  \newblock Algebraic geometry (v5.22), 2012.
  \newblock Available at \url{www.jmilne.org/math/}.

  \bibitem{robbiano86}
  L.~Robbiano.
  \newblock On the theory of graded structures.
  \newblock {\em J. Symbolic Comput.}, 2(2):139--170, 1986.

  \bibitem{Stanley78}
  R.~P. Stanley.
  \newblock Hilbert functions of graded algebras.
  \newblock {\em Advances in Math.}, 28(1):57--83, 1978.

\end{thebibliography}

\end{document}
